\documentclass[sigconf, nonacm]{acmart}

\newcommand\vldbdoi{XX.XX/XXX.XX}
\newcommand\vldbpages{XXX-XXX}
\newcommand\vldbvolume{20}
\newcommand\vldbissue{1}
\newcommand\vldbyear{2027}
\newcommand\vldbauthors{\authors}
\newcommand\vldbtitle{\shorttitle} 
\newcommand\vldbavailabilityurl{https://github.com/meimeimeiarc/PP-RFANNS}
\newcommand\vldbpagestyle{plain} 
\newcommand{\eat}[1]{}

\usepackage{bm}
\usepackage{algorithmic}
\usepackage[linesnumbered,ruled,vlined]{algorithm2e}
\usepackage{pifont}
\usepackage{enumitem}
\usepackage{footmisc}

\usepackage{graphicx}
\usepackage{booktabs}
\usepackage{makecell}

\begin{document}
\title{Efficient Privacy-Preserving Range Filtered Approximate Nearest Neighbor Search}

\author{Haoyu Wang}
\affiliation{%
  \institution{Xidian University}
  \city{Xi'an}
  \country{China}
}
\email{hyw2004@stu.xidian.edu.cn}

\author{Yandi Zhang}
\orcid{0000-0002-1825-0097}
\affiliation{%
  \institution{Xidian University}
  \city{Xi'an}
  \country{China}
}
\email{ydzhang\_2@stu.xidian.edu.cn}

\author{Jiadong Xie}
\orcid{0000-0001-5109-3700}
\affiliation{%
  \institution{The Chinese University of Hong Kong}
  \city{Hong Kong SAR}
  \country{China}
}
\email{jdxie@se.cuhk.edu.hk}

\author{Yingfan Liu}
\authornote{Yingfan Liu is the corresponding author.}
\affiliation{%
  \institution{Xidian University}
  \city{Xi'an}
  \country{China}
}
\email{liuyingfan@xidian.edu.cn}

\author{Hui Li}
\affiliation{%
  \institution{Xidian University}
  \city{Xi'an}
  \country{China}
}
\email{hli@xidian.edu.cn}

\author{Jeffrey Xu Yu}
\affiliation{%
  \institution{The Hong Kong University of Science and Technology (Guangzhou)}
  \city{Guangzhou}
  \country{China}
}
\email{jeffreyxuyu@hkust-gz.edu.cn}

\author{Jiangtao Cui}
\affiliation{%
  \institution{Xi'an University of Posts and Telecommunications}
  \city{Xi'an}
  \country{China}
}
\affiliation{%
  \institution{Xidian University}
  \city{Xi'an}
  \country{China}
}
\email{cuijt@xupt.edu.cn}

\begin{abstract}
Range-filtered approximate nearest neighbor search (RFANNS) is an important primitive for vector databases; it retrieves vectors that are similar to a query and satisfy a numerical range predicate, but existing RFANNS indexes expose vectors, attributes, and queries in plaintext.
This assumption is unsuitable for outsourced vector databases, where sensitive data and queries must be protected from an honest-but-curious cloud server.
To the best of our knowledge, this is the first study that systematically formulates and evaluates privacy-preserving RFANNS over outsourced encrypted vector databases.
Our approach separates range localization from encrypted vector search: an authorized user maps the query range to a compact set of nodes in a local N-ary attribute tree, and the server searches only the corresponding proximity graph sub-indices over encrypted vectors. To reduce expensive encrypted comparisons, we use a filter-and-refine pipeline that first retrieves coarse candidates with approximate distance-comparison-preserving encryption and then reranks a small candidate set with exact distance-comparison encryption. We then analyze the computation, storage, communication, and leakage of the protocol. Experiments on four widely used vector datasets show that our method improves the QPS–Recall trade-off over representative secure adaptations of existing RFANNS approaches, scaling effectively to large datasets.
%

\end{abstract}

\maketitle

\pagestyle{\vldbpagestyle}
\begingroup\small\noindent\raggedright\textbf{PVLDB Reference Format:}\\
\vldbauthors. \vldbtitle. PVLDB, \vldbvolume(\vldbissue): \vldbpages, \vldbyear.\\
\href{https://doi.org/\vldbdoi}{doi:\vldbdoi}
\endgroup
\begingroup
\renewcommand\thefootnote{}\footnote{\noindent
This work is licensed under the Creative Commons BY-NC-ND 4.0 International License. Visit \url{https://creativecommons.org/licenses/by-nc-nd/4.0/} to view a copy of this license. For any use beyond those covered by this license, obtain permission by emailing \href{mailto:info@vldb.org}{info@vldb.org}. Copyright is held by the owner/author(s). Publication rights licensed to the VLDB Endowment. \\
\raggedright Proceedings of the VLDB Endowment, Vol. \vldbvolume, No. \vldbissue\ %
ISSN 2150-8097. \\
\href{https://doi.org/\vldbdoi}{doi:\vldbdoi} \\
}\addtocounter{footnote}{-1}\endgroup

\ifdefempty{\vldbavailabilityurl}{}{
\vspace{.3cm}
\begingroup\small\noindent\raggedright\textbf{PVLDB Artifact Availability:}\\
The source code, data, and/or other artifacts have been made available at \url{\vldbavailabilityurl}.
\endgroup
}


\section{Introduction}

With the increasing adoption of vector representations, objects such as images~\cite{radford2021learning, he2022masked} and texts~\cite{feng2022language,wang2024improving} are encoded as high-dimensional vectors that capture their semantic information. Vector similarity search is commonly supported by $k$-approximate nearest neighbor search ($k$-ANNS), which has become a fundamental component of vector databases and retrieval-augmented generation~\cite{RAG,RAG-ACL}.
In many applications, pure $k$-ANNS is insufficient because the returned objects must also satisfy structured constraints. For example, an e-commerce user may search for products visually similar to a query image within a specified price range, while a multimedia application may retrieve similar objects appearing within a temporal interval. These scenarios motivate \emph{range-filtered approximate nearest neighbor search} (RFANNS), where each object consists of a vector and a numerical attribute, and each query contains a query vector and a range predicate over the attribute domain.

Recent plaintext RFANNS methods~\cite{WST,iRange,JiangYZHSZLW25,wang2025wow} combine vector indices such as HNSW~\cite{Hnsw} with range-aware structures such as segment trees and B-trees, achieving efficient range-constrained vector search. However, they assume that vectors, attributes, and queries are available in plaintext. This assumption is unsuitable for outsourced data management, where data owners rely on cloud servers for storage and query processing but wish to protect database vectors, numerical attributes, query vectors, and query ranges.
This motivates \emph{privacy-preserving range-filtered approximate nearest neighbor search} (PP-RFANNS). In this setting, a data owner outsources an encrypted vector database to an honest-but-curious cloud server, and authorized users issue encrypted RFANNS queries. The server should return high-quality approximate neighbors satisfying the range predicate, without learning the protected database and query contents beyond explicitly characterized leakage.

Supporting PP-RFANNS is challenging because it requires simultaneously satisfying three conflicting goals: security, accuracy, and efficiency, because encryption usually destroys the direct usability of conventional vector indices and attribute filters. In particular, exact secure distance comparison~\cite{zheng2024achieving,liu2025ppanns} is often much more expensive than plaintext distance computation, while secure range checking over encrypted attributes can introduce significant cryptographic overhead or require additional interaction~\cite{zhang2024performance}.
Furthermore, existing approaches do not directly resolve this tension.
Plaintext RFANNS methods violate the privacy requirement, while privacy-preserving ANNS methods~\cite{peng2017reusable,xue2018secure,chen2020sanns,servan2022private,zhou2025pacmann,liu2025ppanns} support only pure vector search and do not efficiently handle range predicates. 
Straightforward secure adaptations also suffer from selectivity-sensitive bottlenecks: secure pre-filtering performs expensive encrypted vector search over all qualified objects, secure post-filtering requires substantial over-fetching under low selectivity, and encrypted adaptations of dedicated RFANNS indices such as iRangeGraph~\cite{iRange} repeatedly invoke costly secure range checks as new neighbors are inspected during online graph traversal.

To address these limitations, we propose a practical PP-RFANNS scheme that decouples range localization from encrypted vector search. Our scheme employs an N-ary tree–HNSW hybrid index. The N-ary tree, maintained by the query user, hierarchically organizes the attribute domain, while the cloud server stores an HNSW sub-index for each tree node over the encrypted vectors in its interval. Given a query, the query user locally maps the range predicate to a compact set of matched tree nodes, after which the cloud server searches only the corresponding encrypted HNSW sub-indices.
We further design a filter-and-refine query strategy. During filtering, the server uses DCPE ciphertexts~\cite{fuchsbauer2022approximate}, which approximately preserve distance ordering, to efficiently retrieve coarse candidates from the matched sub-indices. It then merges these candidates and applies DCE-based exact distance comparison~\cite{liu2025ppanns} only to a reduced candidate set. This design avoids online encrypted range checking and limits expensive exact comparisons while preserving high search accuracy. Moreover, we analyze the computation, communication, storage, and security properties of the proposed scheme under the honest-but-curious model, and explicitly characterize the leakage arising from the encrypted vector representations, index structures, and query execution. Experiments on widely used vector datasets show that our method achieves consistently better efficiency–accuracy trade-offs than representative secure adaptations of existing RFANNS approaches across different query selectivities, while scaling effectively to larger datasets. {At $Recall@10$ = 0.95, PP-RFANNS improves query throughput by at least two orders of magnitude over the secure baselines.} 


Our main contributions are summarized as follows.
\ding{202} We formulate the PP-RFANNS problem over outsourced encrypted data and identify the security, accuracy, and efficiency challenges arising from the joint processing of vector similarity and range predicates.
\ding{203} We propose an N-ary tree–HNSW hybrid index that separates local range localization from server-side encrypted vector search, allowing the server to search only the sub-indices relevant to the query range.
\ding{204} We design a filter-and-refine query strategy that combines efficient DCPE-based candidate retrieval with DCE-based exact refinement, balancing query efficiency and search accuracy.
\ding{205} We provide cost and security analyses and conduct extensive experiments demonstrating that the proposed scheme consistently outperforms representative secure adaptations under different settings.

\section{Preliminaries}

In this section, we present the system model and threat model, followed by our problem definition and design goals. 

\subsection{System Model}
As shown in Figure~\ref{fig:system_model}, the system consists of three participants: a data owner (DO), a query user (QU), and a cloud server (CS).

\begin{figure}[t]
  \centering
  \includegraphics[width=\linewidth]{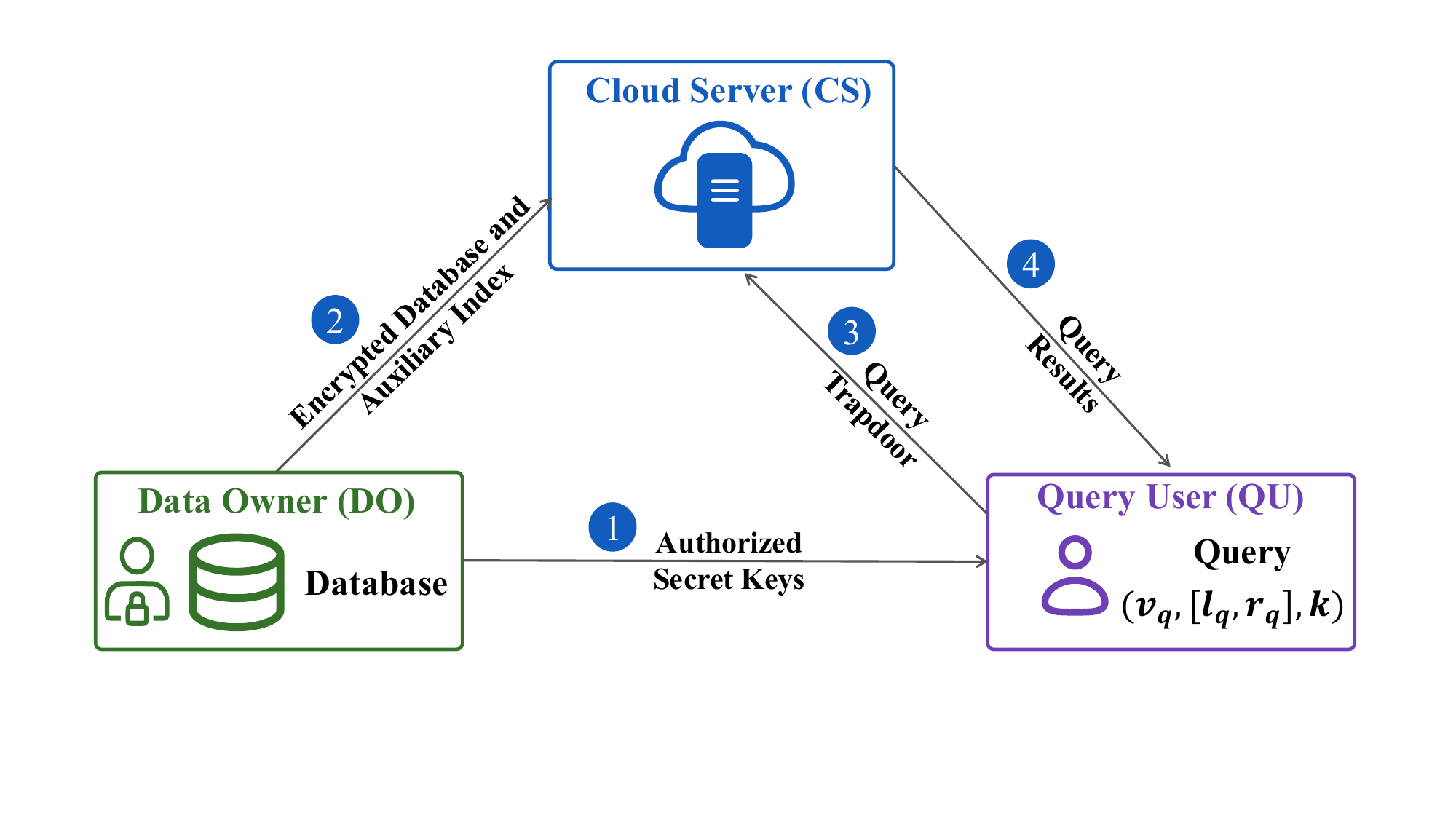}
  \caption{The system model of PP-RFANNS.}
  \label{fig:system_model}
\end{figure}

\noindent\textbf{Data Owner (DO): } The DO holds a database of objects, denoted as $O = \{o_i\}_{i=1}^n$, where each $o_i = (v_i, a_i)$ consists of a $d$-dimensional vector $v_i \in \mathbb{R}^d$ and a numerical attribute {$a_i \in \mathcal{A} \subset \mathbb{Z}$}~\footnote{This setting can be naturally extended to $a_i \in \mathbb{R}$ through \emph{IEEE 754 sortable integer encoding}, where $\mathcal{A}$ represents the attribute domain. Specifically, floating-point values can be transformed into order-preserving integer keys, such that their numerical order is preserved after encoding. As a result, range queries over real-valued attributes can be equivalently processed in the encoded integer domain.}. 
The DO encrypts the objects into ciphertexts and outsources them to the CS. To support efficient PP-RFANNS queries, the DO builds a privacy-preserving index and distributes it to the CS. Besides, the DO sends the secret keys for trapdoor generation to the authorized QUs. 
\\
\textbf{Query User (QU): } 
An RFANNS query $q = (v_q,[l_q,r_q],k)$ includes three parts, i.e., (1) the query vector $v_q\in\mathbb{R}^d$, the query range $[l_q,r_q] \subset \mathcal{A}$ and the number $k$ of returned neighbors. 
To protect the query $q$, the QU encrypts $q$ into the query trapdoor $T_q$ according to the secret keys, and then sends $T_q$ to the CS.
\\
\textbf{Cloud Server (CS): }
The CS stores the encrypted objects and the auxiliary index from the DO to answer efficient PP-RFANNS queries. Once the query trapdoor $T_q$ is received from the QU, it conducts the search process over the encrypted objects and the index and returns the results to the QU. 

\subsection{Threat Model}

We consider an outsourced PP-RFANNS setting involving a DO, authorized QUs, and one or more CSs that store encrypted data and participate in query processing. The DO and authorized QUs are assumed to be honest and follow the prescribed procedures. In particular, they do not disclose secret keys, plaintext objects, or plaintext queries to the CSs or any unauthorized party.
The CSs are assumed to be \emph{honest-but-curious}. They faithfully store the outsourced encrypted objects, maintain the required server-side indices, and execute the prescribed query-processing protocols, but may attempt to infer sensitive information from their respective views. Such views may include encrypted objects, query trapdoors, server-side index structures, intermediate protocol messages, and query-execution traces. If a PP-RFANNS scheme employs multiple CSs, its assumptions regarding possible collusion among them must be explicitly specified by that scheme.
Encrypted indexing and query processing may inevitably expose scheme-dependent structural, access-pattern, and cryptographic information, such as index topology, accessed objects, range-checking outcomes, and distance-comparison information. The precise leakage depends on the adopted scheme. We characterize the leakage of our proposed scheme in Section~\ref{sec:security}.

\subsection{Problem Definition}

Given an object set $O$, let $V = \{v_i | o_i \in O\}$ be the vector set and $A = \{a_i | o_i \in O\}$ be the set of attribute values. For a query $q = (v_q, [l_q, r_q], k)$, let $O_{[l_q, r_q]} = \{o_i | a_i \in [l_q, r_q]\}$ be the set of qualified objects and $V_{[l_q, r_q]} = \{v_i | o_i \in O_{[l_q, r_q]}\}$ be the qualified vector set. 

\begin{definition}[$k$-ANNS]
\label{def:k-anns}
Given a vector set $V \subset \mathbb{R}^d$, a query vector $v_q \in \mathbb{R}^d$ and an integer $k$, $k$-ANNS returns $k$ sufficiently close neighbors in $V$ to $v_q$. 
\end{definition} 

As the underlying search primitive of RFANNS and PP-RFANNS, the k-ANNS problem has been studied for a few decades. According to recent studies~\cite{DPG, survey2021, AziziEP25}, proximity graphs (PG) such as HNSW~\cite{Hnsw}, NSG~\cite{Nsg} and ALMG~\cite{ALMG} have been the SOTA $k$-ANNS methods. A PG treats each vector in $V$ as a graph vertex and builds edges between each node and its closest neighbors in $V$. Different PGs share the same vertex set but have different edge sets due to various edge selection strategies. Notably, PG is the building block of the mainstream RFANNS methods and our PP-RFANNS scheme in this work. 

\begin{definition}[RFANNS]
\label{def:rfanns}
Given an object set $O$ and a query $q$, RFANNS returns the $k$-ANN of $v_q$ over the vector set $V_{[l_q, r_q]}$. 
\end{definition}



\begin{definition}[PP-RFANNS]
\label{def:pp-rfanns}
Given an encrypted representation $C_O$ of $O$ and a query trapdoor $T_q$ generated for a query $q$, PP-RFANNS returns the identifiers of $k$-ANN of $v_q$ from the set $V_{[l_q,r_q]}$ by operating over encrypted data and encrypted queries.
\end{definition}

The security of a PP-RFANNS scheme depends on both the guarantees of its underlying cryptographic primitives and the information exposed by its encrypted index and query execution. In this work, we do not claim a new simulation-based security result for the complete composition. Instead, we rely on the established security properties of the adopted primitives and explicitly characterize the additional leakage of the complete protocol in Section~\ref{sec:security}.


\subsection{Design Goals}
Our goal is to design a practical PP-RFANNS scheme satisfying the following requirements.
\begin{itemize}
    \item \textbf{Security.} 
    The scheme should protect the plaintext database vectors and numerical attributes, as well as the query vectors, numerical query bounds, and exact distance values, against an honest-but-curious CS, except for the information explicitly revealed by the underlying encryption primitives, the server-side index, and query execution. 
    \item \textbf{Accuracy.} The scheme should return high-quality $k$-ANN of the query vector $v_q$ from the qualified vector set $V_{[l_q,r_q]}$.
    \item \textbf{Efficiency.} The scheme should support efficient PP-RFANNS query processing at the CS, while incurring low computation overhead at the QU and low communication overhead between the QU and the CS.
\end{itemize}

\section{Limitations of Existing Works}
\label{sec:limit}

In this section, we first review plaintext RFANNS methods, privacy-preserving range query and privacy-preserving $k$-ANNS (PP-ANNS) methods, and then discuss the secure adaptations of existing RFANNS methods to answer PP-RFANNS. 

\subsection{Plaintext RFANNS Methods}

Existing RFANNS methods can be divided into general-purpose approaches and dedicated methods. General-purpose approaches include pre-filtering~\cite{ADBV,milvus,vbase}, post-filtering~\cite{ADBV,milvus}, and ACORN~\cite{ACORN}. Pre-filtering performs brute-force search over the objects satisfying the range predicate, whereas post-filtering first conducts global $k$-ANNS and then removes unqualified results. ACORN modifies HNSW~\cite{Hnsw} by enlarging node degrees and restricting traversal to qualified nodes.
Dedicated RFANNS methods~\cite{SeRF,WST,iRange,JiangYZHSZLW25,wang2025wow,MSTG} combine range-aware structures, such as segment trees~\cite{WST,iRange} or B-trees~\cite{JiangYZHSZLW25}, with PG indexes. By restricting graph search to range-relevant partitions or edges, they achieve substantially better query performance than general-purpose approaches.

Among the dedicated methods, iRangeGraph~\cite{iRange} builds a segment-tree-based multi-layer index, where each tree node maintains a PG index, called an \emph{elemental graph}, over the objects in its attribute interval. During query processing, it dynamically selects in-range edges from the relevant elemental graphs to build a range-dedicated graph during traversal. This online edge-selection procedure requires range checks over candidates.


\underline{\textbf{Issue:}} These methods solve RFANNS by exploiting plaintext data, which should be protected in our privacy-preserving scenarios, and violate the goal of security. Hence, they cannot solve PP-RFANNS. 

\subsection{Privacy-Preserving Range Query}


Privacy-preserving range queries retrieve encrypted records whose numerical attributes fall within a query interval $[l_q,r_q]$, while protecting the attribute values and query bounds from the cloud server. Existing solutions mainly rely on Order-Preserving/Order-Revealing Encryption (OPE/ORE), Searchable Symmetric Encryption (SSE), or Symmetric Homomorphic Encryption (SHE).

OPE/ORE-based methods~\cite{boldyreva2009order,lewi2016order,kerschbaum2015frequency} and SSE-based methods~\cite{demertzis2016practical,wang2019forward,molla2023efficient} support efficient range retrieval through encrypted indices, but reveal information such as ciphertext order, search patterns, access patterns, or result sizes.
SHE-based methods~\cite{mahdikhani2020achieving,zheng2021efficient,zhang2024performance} evaluate range predicates directly over encrypted attributes and avoid explicitly revealing their global order. A basic construction evaluates the encrypted range predicate independently for each examined record. 
Existing schemes reduce this cost through auxiliary index structures, which limit the number of examined records, and optimized SHE constructions, such as iSHE~\cite{zhang2024performance}, which improve the efficiency of each encrypted predicate evaluation. Nevertheless, repeated range checks can still incur substantial cryptographic
computation and, when decryption is delegated to a non-colluding auxiliary server, inter-server communication and interaction overhead.

\underline{\textbf{Issue:}} These methods focus on secure range query over encrypted attributes, but do not support high-dimensional encrypted vector search. Hence, they cannot directly solve PP-RFANNS. 

\subsection{Privacy-Preserving ANNS Methods}

Existing privacy-preserving ANNS (PP-ANNS) methods protect outsourced vectors and auxiliary ANN indices, while using encrypted indices to reduce the search space. According to whether vector distances can be evaluated directly over ciphertexts, they can be broadly classified into distance-incomparable and distance-comparable approaches.

Distance-incomparable approaches, including AES- and PIR-based schemes~\cite{peng2017reusable,servan2022private,zhou2025pacmann}, cannot support distance-guided search at the CS. Although an auxiliary index may identify a candidate subset, the QU must retrieve and decrypt candidates or interactively access encrypted index entries, resulting in substantial communication and QU-side computation.
Distance-comparable methods employ primitives such as ASPE~\cite{wong2009secure,li2019insecurity,miao2023efficient,yuan2017practical}, DCPE~\cite{fuchsbauer2022approximate}, AME~\cite{zheng2024achieving}, HE~\cite{furon2013fast,xue2018secure}, or DCE~\cite{liu2025ppanns} to enable distance-guided search over encrypted vectors. However, indices built from plaintext vectors may reveal proximity relationships, while secure distance evaluation at every search step can be expensive. Moreover, ASPE-based schemes have known security weaknesses~\cite{lin2017revisiting,li2019insecurity,liu2025ppanns}, DCPE preserves only approximate distance order, and AME- and HE-based methods incur relatively high computational overhead.
Liu~\textit{et al.}~\cite{liu2025ppanns} address these issues by constructing HNSW directly over DCPE-encrypted vectors. The CS first uses DCPE for efficient coarse retrieval and then applies DCE only to the reduced candidate set for exact reranking. This design supports efficient encrypted graph search while limiting the number of expensive exact distance comparisons.

\underline{\textbf{Issue:}} They focus on pure ANNS without handling attribute values, and thus cannot directly support efficient PP-RFANNS.

\subsection{Secure Adaptations of RFANNS}

\begin{figure}[t]
  \centering
  \includegraphics[width=\linewidth]{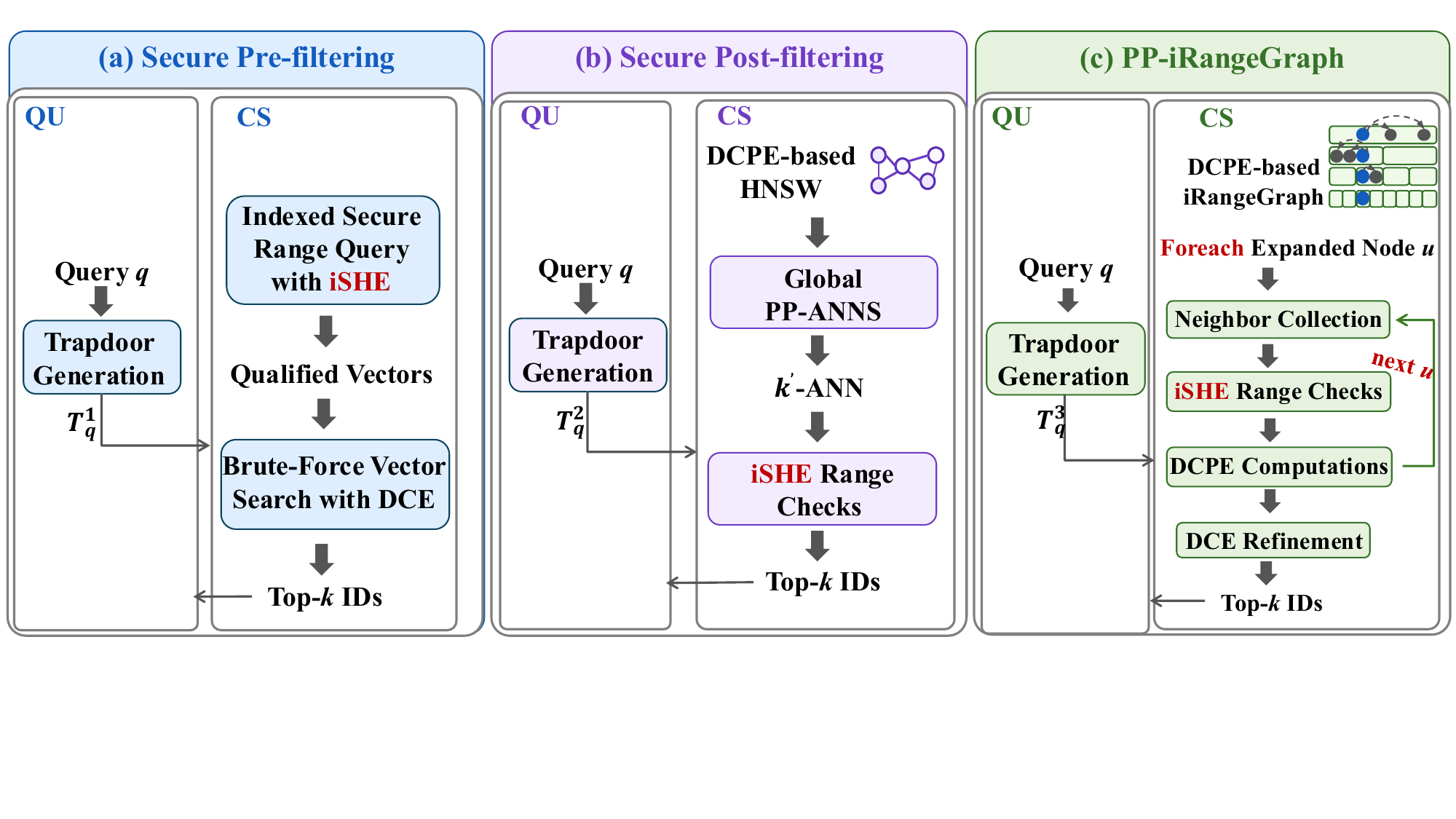}
  \caption{The illustrations of secure adaptations}
  \label{fig:secure_adaptations}
\end{figure}

We consider simple adaptations of three representative RFANNS methods in the encrypted setting, denoted as secure pre-filtering, secure post-filtering, and PP-iRangeGraph, as illustrated in Figure~\ref{fig:secure_adaptations}. 
We instantiate all three secure adaptations with iSHE-based range checking. iSHE naturally supports candidate-wise range checks, which fits Secure Post-filtering and PP-iRangeGraph. OPE/ORE-based alternatives reveal the global order of encrypted attributes, while range-SSE introduces a different encrypted-index leakage profile. Using the same iSHE-based protocol provides a consistent basis for
comparing the three query-processing strategies.

\noindent\underline{\textbf{Secure Pre-filtering.}}
Given a query $q=(v_q,[l_q,r_q],k)$, the QU generates the query trapdoor and sends it to the CS. The CS searches a B-tree over encrypted attributes and selectively invokes iSHE, to identify the qualified vectors. It then performs brute-force DCE-based distance comparisons over all qualified vectors and returns the top-$k$ results.
Let $m_q^{\mathsf{pre}}$ be the number of iSHE checks and $s_q$ the query selectivity. The query cost includes $m_q^{\mathsf{pre}}C_{\mathsf{rc}}$ for range checking and $O(s_q n\log k)$ DCE comparisons. Although the B-tree reduces range checks, the exhaustive encrypted vector search becomes expensive at high selectivity.

\noindent\underline{\textbf{Secure Post-filtering.}}
Secure Post-filtering first performs global PP-ANNS~\cite{liu2025ppanns} over the entire encrypted vector database. Specifically, the CS searches a DCPE-based HNSW index and retrieves the global top-$k'$ candidates, where $k' \geq k$. {It then performs iSHE-based range checks over these candidates, and applies DCE refinement to the qualified candidates to obtain the final top-$k$ results.}
The main difficulty is selecting $k'$. Under low selectivity, a small $k'$ may return fewer than $k$ qualified results and require repeated searches, whereas a large $k'$ incurs unnecessary encrypted vector search and iSHE-based range-checking overhead. Consequently, Secure Post-filtering performs poorly for low-selectivity queries.

\noindent\underline{\textbf{PP-iRangeGraph.}}
PP-iRangeGraph adapts iRangeGraph by encrypting both the vectors and numerical attributes. 
The CS iteratively searches the DCPE-based elemental graphs. In each iteration, it selects a node $u$ for expansion and collects its candidate neighbors from the relevant elemental graphs. For every inspected neighbor, the CS invokes iSHE to determine whether its encrypted attribute satisfies the query predicate. {Qualified neighbors are then used to update the DCPE-based coarse candidate set. After the graph search terminates, the CS applies DCE refinement to the DCPE-based coarse candidates and returns the top-$k$ results.}
{If $m_q$ neighbors are inspected, PP-iRangeGraph performs $m_q$ iSHE-based range checks during graph traversal, followed by DCE refinement on the final candidate set.} Repeated iSHE invocation during graph traversal incurs substantial cryptographic computation, inter-server communication, and interaction overhead.



\noindent\underline{\textbf{Issue:}}
In the secure adaptations, all three baselines use iSHE-based range checking. Under our threat model, the resulting ciphertexts are decrypted by a non-colluding auxiliary server, which introduces substantial cryptographic computation, communication, and interaction overhead. Beyond this common cost, the three methods exhibit different selectivity-sensitive bottlenecks. Secure Pre-filtering reduces the number of range checks through an encrypted range index, but still performs exhaustive DCE-based distance comparison over all qualified objects and therefore becomes expensive at high selectivity. Secure Post-filtering applies range checks to candidates returned by global PP-ANNS and may require substantial over-fetching or repeated searches at low selectivity. PP-iRangeGraph repeatedly invokes range checking as new neighbors are inspected during online graph traversal, causing persistent cryptographic and interaction overhead. Consequently, none of these adaptations provides consistently efficient PP-RFANNS query processing across different selectivities.

\section{Our PP-RFANNS Scheme}




In this section, we present our PP-RFANNS scheme. The key idea is to decouple range localization from encrypted vector search, such that the CS searches only the secure vector sub-indices corresponding to the query range. We first give an overview of the scheme, and then describe secure index construction, secure query processing, and cost analysis.

\subsection{Overview}

Our proposed PP-RFANNS scheme consists of two main procedures: \textbf{secure index construction} and \textbf{secure query processing}, as shown in Figure~\ref{fig:index_construction} and Figure~\ref{fig:query_process}, respectively. Given an object set $O=(V,A)$, the DO constructs the secure index in three phases. First, it encrypts the vector set $V$ using DCPE and DCE, producing $C_V^{\mathsf{DCPE}}$ and $C_V^{\mathsf{DCE}}$, respectively. Second, it constructs a range-aware N-ary tree $\mathcal{T}$ over the attribute set $A$. Third, it builds an HNSW sub-index $G_{tn}$ for each tree node $tn \in \mathcal{T}$. After index construction, the DO distributes the DCPE and DCE secret keys, denoted by $SK_{\mathsf{DCPE}}$ and $SK_{\mathsf{DCE}}$, together with $\mathcal{T}$, to the QU. Meanwhile, $C_V^{\mathsf{DCPE}}$, $C_V^{\mathsf{DCE}}$, and the HNSW sub-indices are outsourced to the CS.

During secure query processing, the QU and the CS perform distinct tasks: the QU locally interprets the range predicate and generates encrypted query trapdoors, whereas the CS executes the vector search over the encrypted sub-indices selected by the QU. Specifically, for a query $q=(v_q,[l_q,r_q],k)$, the QU generates the query trapdoor $T_q=(T_q^{\mathsf{DCPE}},T_q^{\mathsf{DCE}})$ using $SK_{\mathsf{DCPE}}$ and $SK_{\mathsf{DCE}}$, respectively. It then evaluates the range predicate over $\mathcal{T}$ to identify the set $I_q$ of matching tree nodes and sends $(T_q,I_q)$ to the CS. Upon receiving them, the CS employs a filter-and-refine strategy to answer the query.


\begin{figure}
  \centering
  \includegraphics[width=\linewidth]{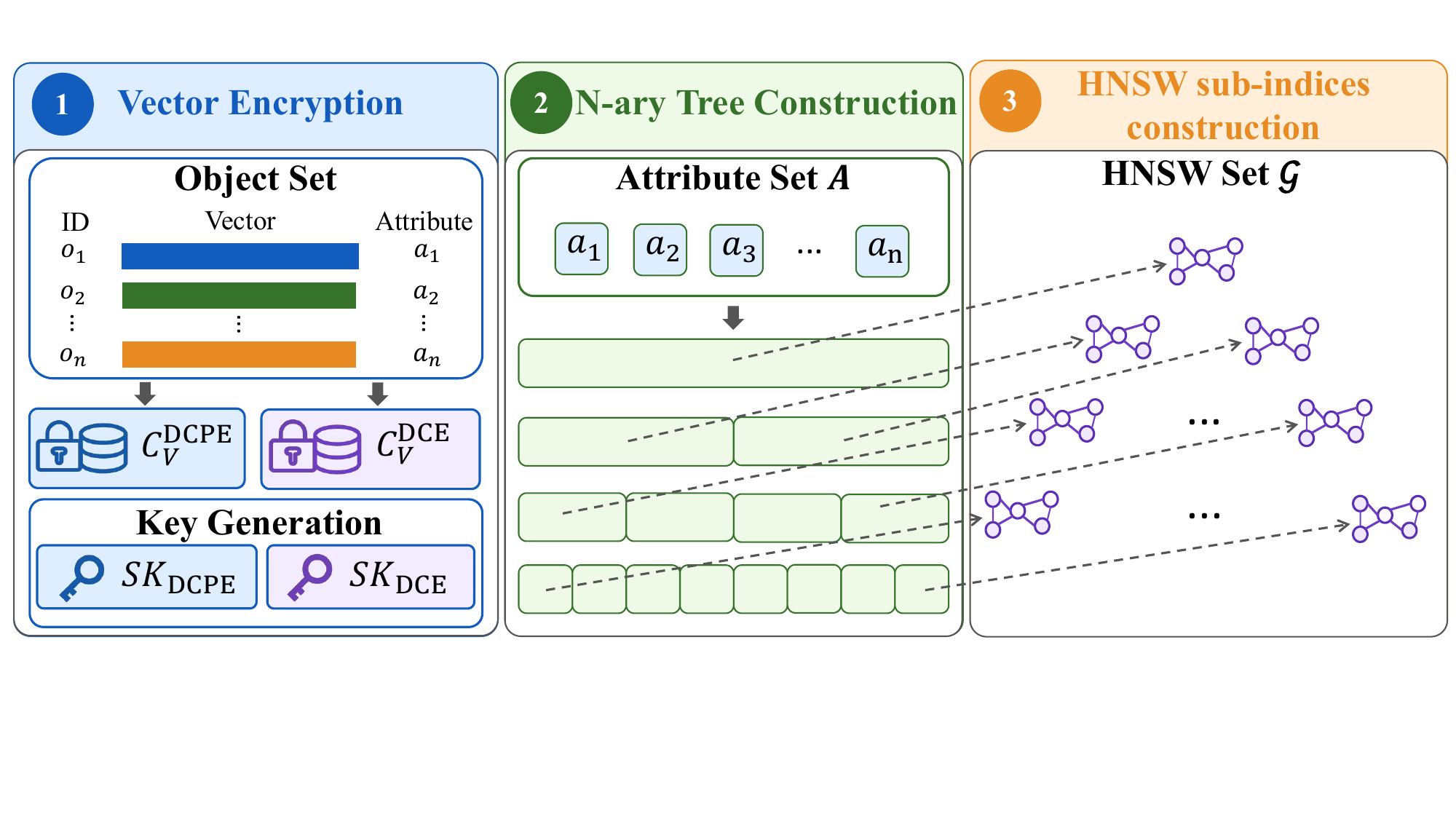}
  \caption{The workflow of secure index construction}
  \label{fig:index_construction}
\end{figure}

\begin{figure}
  \centering
  \includegraphics[width=\linewidth]{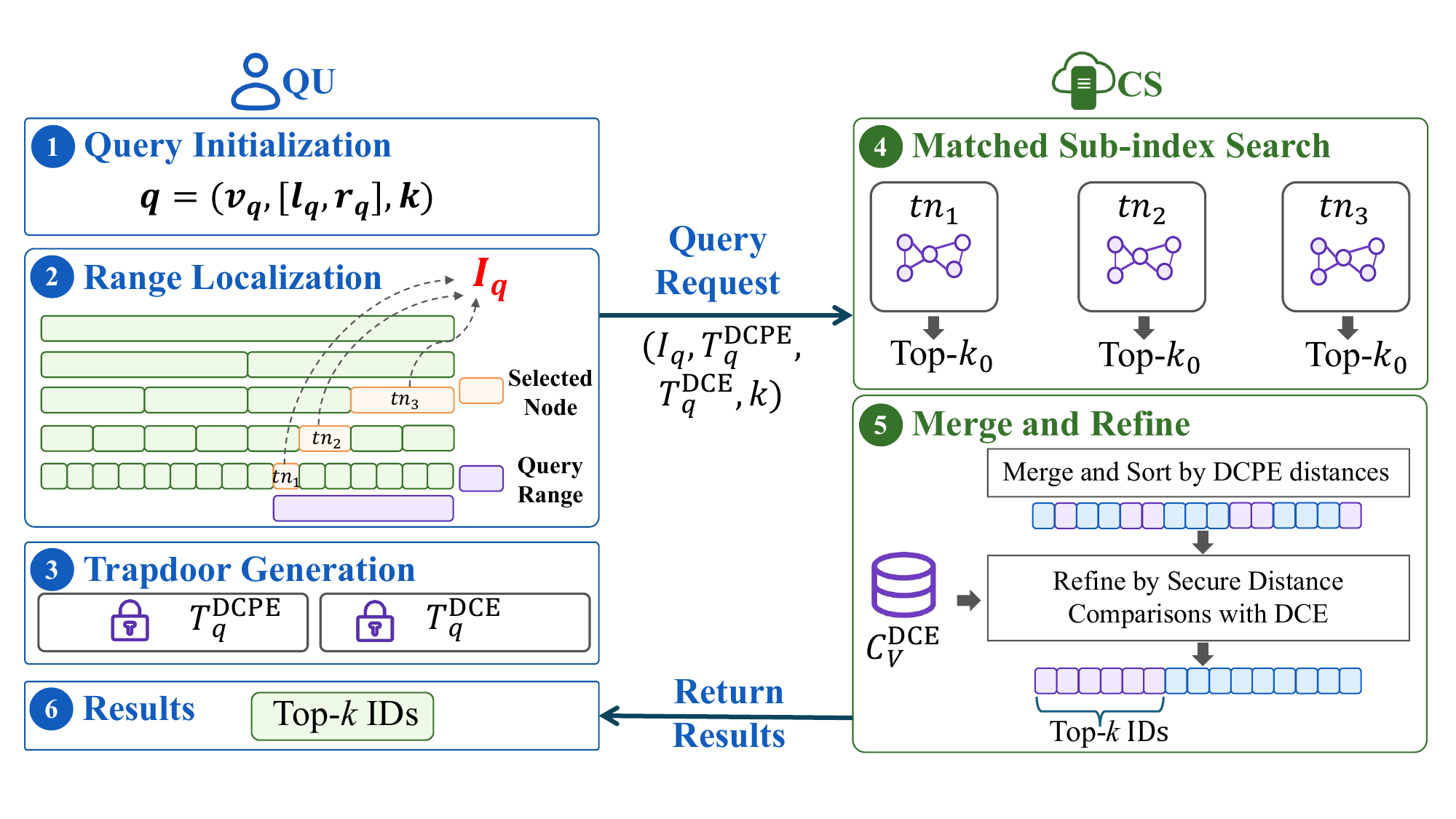}
  \caption{The workflow of secure query processing}
  \label{fig:query_process}
\end{figure}






\subsection{Secure Index Construction}

As in Figure~\ref{fig:index_construction}, the secure index construction consists of three phases, i.e., \emph{vector encryption}, \emph{N-ary tree construction} and \emph{HNSW sub-indices construction}. 


\noindent\underline{\textbf{Vector Encryption.}}
The DO encrypts each vector $v\in V$ using both DCPE and DCE, which serve different purposes in our scheme. The DCPE ciphertexts are used to construct and search the HNSW sub-indices, while the DCE ciphertexts are used for exact secure distance comparisons.

For DCPE, we adopt the Scale-and-Perturb (SAP) construction in~\cite{fuchsbauer2022approximate}. Given the secret key $SK_{\mathsf{DCPE}}=(s,\beta)$, the DO generates the ciphertext as $C_v^{\mathsf{DCPE}}=s\cdot v+\lambda_v$, where $s\in\mathbb{R}^{+}$ is a random scaling factor and $\lambda_v$ is a random vector sampled from the $d$-dimensional ball centered at the origin with radius $s\beta/4$. The resulting ciphertext $C_v^{\mathsf{DCPE}}\in\mathbb{R}^{d}$ approximately preserves the distances among vectors. Encrypting one vector requires $O(d)$ time and $O(d)$ space. 

For DCE, the DO first generates the secret key $SK_{\mathsf{DCE}}=\{M_1,M_2,\\M_3,\pi_1,\pi_2,r_1,r_2,r_3,r_4,kv_1,kv_2,kv_3,kv_4\}$. Here, $M_1$, $M_2$, and $M_3$ are random invertible matrices, $\pi_1$ and $\pi_2$ are random permutation functions, $r_1,\ldots,r_4$ are random scalars, and $kv_1,\ldots,kv_4$ are random masking vectors. The DO then computes the DCE ciphertext as $C_v^{\mathsf{DCE}}\leftarrow\mathsf{DCE.Enc}(v,SK_{\mathsf{DCE}})$. The encryption consists of two main steps. First, vector randomization permutes, splits, and pads $v$ with random values, producing an intermediate vector $\bar{v}\in\mathbb{R}^{d+8}$. Second, vector transformation applies random matrix transformations and element-wise masking to $\bar{v}$, producing the ciphertext $C_v^{\mathsf{DCE}}=(\bar{v}'_1,\bar{v}'_2,\bar{v}'_3,\bar{v}'_4)$, where each component satisfies $\bar{v}'_j\in\mathbb{R}^{2d+16}$. Therefore, each DCE ciphertext contains $8d+64$ scalar values. Since the encryption involves matrix--vector multiplications over matrices whose dimensions are linear in $d$, its computational complexity is $O(d^2)$, while its space complexity is $O(d)$. 

After encrypting all vectors, the DO obtains $C_V^{\mathsf{DCPE}}=\{C_v^{\mathsf{DCPE}}\mid v\in V\}$ and $C_V^{\mathsf{DCE}}=\{C_v^{\mathsf{DCE}}\mid v\in V\}$. 

\noindent\underline{\textbf{N-ary Tree Construction.}}
Given the numerical attribute set $A=\{a_1,a_2,\ldots,a_n\}$, the DO constructs a range-aware N-ary tree $\mathcal{T}$ to hierarchically organize $A$. The DO first sorts the objects according to their attribute values and determines the minimum and maximum values, denoted by $a_{\min}$ and $a_{\max}$, respectively. The root node of $\mathcal{T}$ represents the entire attribute domain $[a_{\min},a_{\max}]$ and contains all objects in $O$.
The tree is then constructed recursively in a top-down manner. For each non-leaf node $tn$ associated with an interval $[l_{tn}, r_{tn}] \subset A$ and an object subset $O_{tn} = \{o_i | o_i \in O, a_i \in [l_{tn}, r_{tn}]\}$, the DO partitions $[l_{tn}, r_{tn}]$ into at most $b$ non-overlapping and consecutive sub-intervals,
$[l_{tn}^1, r_{tn}^1],\ldots,[l_{tn}^m, r_{tn}^m]$, where $m \leq b$. The
corresponding object subset is divided as $O_{tn}^{j}=\{o_i\in O_{tn}\mid a_i\in [l_{tn}^j, r_{tn}^j]\}$, and a child node is created for each nonempty subset $O_{tn}^{j}$. The intervals of all child nodes jointly cover the interval of their parent node and satisfy
$[l_{tn}^i, r_{tn}^i]\cap [l_{tn}^j, r_{tn}^j]=\emptyset$ for any $i\neq j$.

To obtain a balanced tree, the partition boundaries are selected according to the sorted attribute values such that the child nodes contain approximately the same number of objects. The recursive partitioning terminates when only one attribute value remains. Consequently, each tree node $tn$ maintains its interval $[l_{tn}, r_{tn}]$ and the identifiers of the objects whose attribute values fall within that interval. The resulting tree $\mathcal{T}$ enables a range predicate to be decomposed into a small set of tree nodes whose intervals are fully contained in the query range.

\noindent\underline{\textbf{HNSW Sub-index Construction.}}
After constructing the N-ary tree $\mathcal{T}$, the DO builds an HNSW sub-index for each tree node in a bottom-up manner. For a tree node $tn$, let $V_{tn}=\{v_i\in V\mid o_i\in O_{tn}\}$ denote its associated vector set, and let $C_{V_{tn}}^{\mathsf{DCPE}}=\{C_{v_i}^{\mathsf{DCPE}}\mid v_i\in V_{tn}\}$ denote the corresponding DCPE ciphertext set. The HNSW sub-index associated with $tn$ is denoted by $G_{tn}$ and indexes $C_{V_{tn}}^{\mathsf{DCPE}}$.

The DO constructs the sub-indices by traversing $\mathcal{T}$ in post-order, such that the sub-indices of all child nodes are available before constructing that of their parent. For each leaf node $tn$, the DO directly builds $G_{tn}$ over $C_{V_{tn}}^{\mathsf{DCPE}}$. In particular, if the leaf contains only one object, $G_{tn}$ consists of a single indexed vector.
Consider an internal node $tn$ with child nodes $\{tn_1,\ldots,tn_m\}$, where $m\leq b$. Since $O_{tn}$ is partitioned by its child nodes, its vector set satisfies $V_{tn}=\bigcup_{j=1}^{m}V_{tn_j}$. To avoid constructing $G_{tn}$ from scratch, the DO selects one child node $tn^\ast$ as the base node and initializes $G_{tn}$ with a copy of $G_{tn^\ast}$. The child with the largest vector set is selected, i.e., $tn^\ast=\arg\max_{tn_j}|V_{tn_j}|$, so as to minimize the number of subsequent insertions. The DO then inserts every DCPE ciphertext in $C_{V_{tn_j}}^{\mathsf{DCPE}}$ for each $tn_j\neq tn^\ast$ into $G_{tn}$ using the native insertion operation of HNSW. Consequently, the resulting sub-index contains exactly the vectors associated with $tn$, i.e., $G_{tn}$ indexes $C_{V_{tn}}^{\mathsf{DCPE}}$.

This procedure is repeatedly applied from the leaf level to the root until an HNSW sub-index has been constructed for every node in $\mathcal{T}$. Finally, the DO obtains the set of HNSW sub-indices $\mathcal{G}=\{G_{tn}\mid tn\in\mathcal{T}\}$.

\noindent\underline{\textbf{Index Outsourcing and Key Distribution.}} After completing the vector encryption and index construction, the DO distributes the generated data to the CS and the QU according to their respective roles. Specifically, the DO sends the set of HNSW sub-indices $\mathcal{G}$ to the CS. Each sub-index is associated with the identifier of its corresponding tree node, allowing the CS to locate the required sub-indices according to the matched node set received from the QU. The DO also sends $C_V^{\mathsf{DCPE}}$ and $C_V^{\mathsf{DCE}}$ to the CS for coarse retrieval and exact secure distance comparisons, respectively. 

In contrast, the DO distributes the N-ary tree $\mathcal{T}$, together with the complete secret keys $SK_{\mathsf{DCPE}}$ and $SK_{\mathsf{DCE}}$, to the QU. The tree $\mathcal{T}$ enables the QU to locally evaluate a range predicate and identify its matched node set, whereas the two secret keys are used to generate the DCPE and DCE trapdoors. The CS receives no component of either secret key. Consequently, the QU holds $\{\mathcal{T},SK_{\mathsf{DCPE}},SK_{\mathsf{DCE}}\}$, while the CS holds $\{\mathcal{G},C_V^{\mathsf{DCPE}}, C_V^{\mathsf{DCE}}\}$ and the necessary auxiliary metadata for secure query processing.

\subsection{Secure Query Processing}

Given a query $q=(v_q,[l_q,r_q],k)$, secure query processing consists of query preparation at the QU and query execution at the CS. The QU locally identifies the tree nodes in $\mathcal{T}$ matching the query range, generates the DCPE and DCE trapdoors for $v_q$, and sends the resulting query request to the CS. Upon receiving the request, the CS adopts a filter-and-refine strategy: it first performs coarse search over the matched HNSW sub-indices to obtain a candidate set, and then applies DCE-based exact distance comparisons to refine the candidates and return the final top-$k$ results.




\begin{figure}
  \centering
  \includegraphics[width=\linewidth]{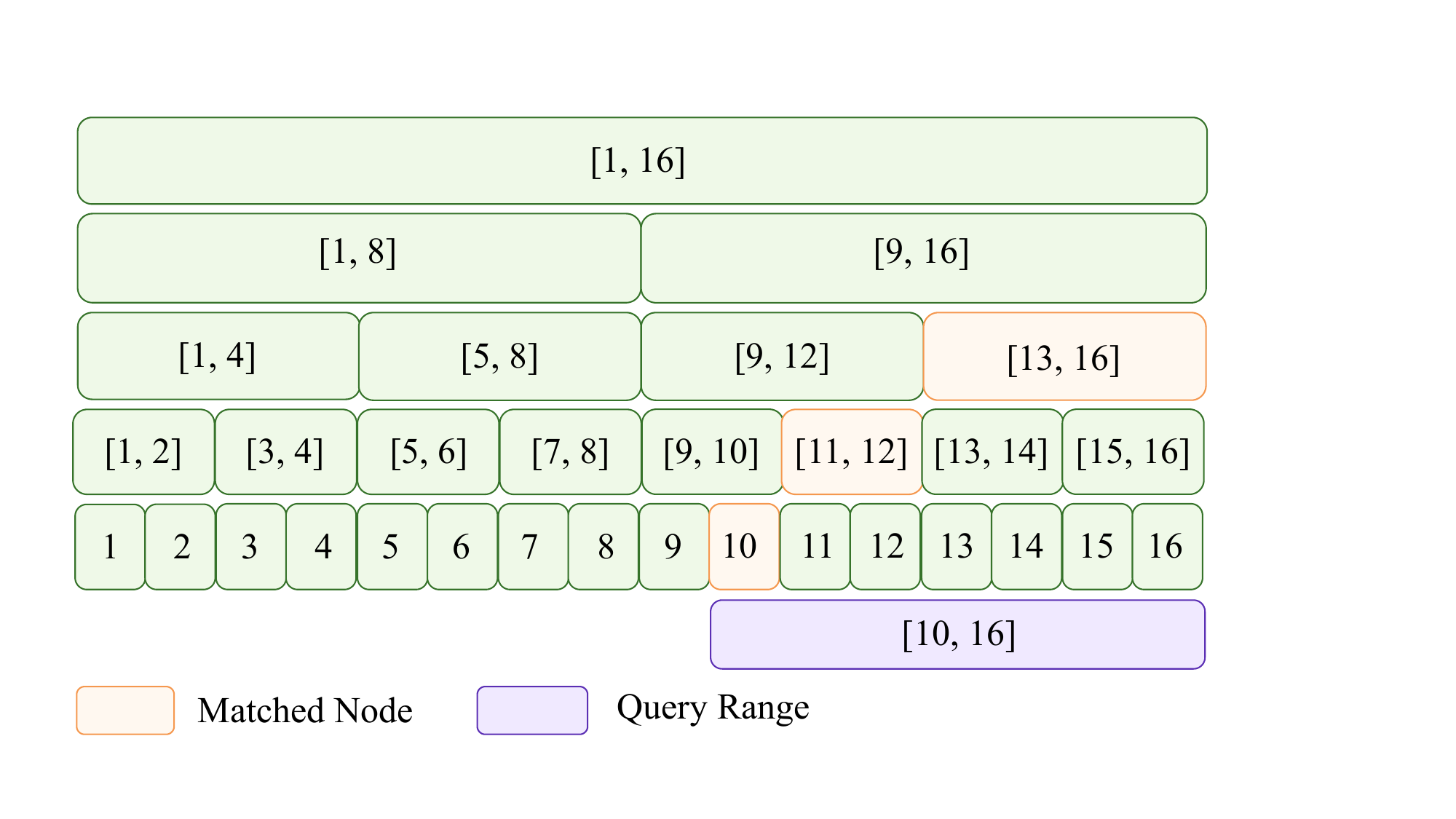}
  \caption{The illustration of range localization over $\mathcal{T}$}
  \label{fig:tree_example}
\end{figure}

\noindent\underline{\textbf{Query Preparation.}}
Given $q$, the QU first performs range localization over its local N-ary tree $\mathcal{T}$. Starting from the root, the QU prunes a node if its interval does not overlap with the query range, selects a node if its interval is fully contained in the query range, and recursively visits its children otherwise. Once a node is selected, its descendants are not further examined. The resulting matched node set, denoted by $I_q$, compactly represents the query range and determines the HNSW sub-indices to be searched by the CS. We illustrate an example of range localization in Figure~\ref{fig:tree_example}.


The QU then encrypts $v_q$ using DCPE and DCE to generate the query trapdoor
$T_q=(T_q^{\mathsf{DCPE}},T_q^{\mathsf{DCE}})$. Finally, it sends $(I_q,T_q,k)$ to the CS for secure query execution.

\begin{theorem}[Exact Range Localization]
\label{thm:exact-range-localization}
Let $I_q$ be the matched node set obtained for the query range $[l_q,r_q]$. For every matched node $tn\in I_q$, its associated interval satisfies $[l_{tn},r_{tn}]\subseteq[l_q,r_q]$. Moreover, the object sets associated with the matched nodes are pairwise disjoint and exactly cover all objects satisfying the range
predicate: $\bigcup_{tn\in I_q} O_{tn} = \{o_i\in O\mid a_i\in[l_q,r_q]\}$. Consequently, every object indexed by the HNSW sub-indices associated with $I_q$ satisfies the query range, and no additional range post-filtering is required at the CS.
\end{theorem}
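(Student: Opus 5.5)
We argue directly from the construction of $\mathcal{T}$ and the range-localization procedure, using two structural invariants. \textbf{(I1)} For each non-leaf node $tn$, the intervals of its children are pairwise disjoint. Consequently the child object sets $O_{tn}^j$ partition $O_{tn}$: every object of $O_{tn}$ lies in exactly one sub-interval, and empty subsets create no child and so lose nothing. \textbf{(I2)} For every node $tn$, $O_{tn}=\{o_i\in O\mid a_i\in[l_{tn},r_{tn}]\}$. This follows by induction from the root, since the root interval is $[a_{\min},a_{\max}]$ and each child set is defined by restricting its parent's set to the child's interval. Iterating (I1) also shows that any two nodes are either nested (ancestor/descendant) or have disjoint object sets.

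The first claim is immediate from the procedure, because a node enters $I_q$ only when $[l_{tn},r_{tn}]\subseteq[l_q,r_q]$. For pairwise disjointness, we note that the traversal stops descending once a node is selected. Hence no two nodes of $I_q$ are in an ancestor/descendant relation, and by the nesting property their object sets are disjoint.

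For the inclusion $\subseteq$, take $o_i\in O_{tn}$ with $tn\in I_q$. By (I2), $a_i\in[l_{tn},r_{tn}]\subseteq[l_q,r_q]$. For $\supseteq$, which is the main step, take $o_i$ with $a_i\in[l_q,r_q]$ and follow the unique root-to-leaf path of nodes whose object set contains $o_i$; this path is well defined by (I1). We show by induction along the path that the traversal either visits each node on it or has already selected an ancestor on it. At a visited node $tn$ on the path, we have $a_i\in[l_{tn},r_{tn}]\cap[l_q,r_q]$, so the node overlaps the query and is not pruned. It is therefore either selected, and we are done, or recursed into, which visits the next node on the path. The delicate point is termination at the leaf. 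Recursion stops when a single attribute value $a$ remains, so the leaf interval is $[a,a]$ with $a=a_i$. This interval is contained in $[l_q,r_q]$, so the leaf is selected rather than recursed. Here I will make explicit that a leaf interval is taken to be the singleton attribute value: if boundaries were chosen as gaps between values, the leaf's interval might not be contained in the query range. To avoid this, I will assume (as the construction indicates) that intervals are tightened to the minimum and maximum attribute values they contain, or that the domain is $\mathbb{Z}$ with consecutive integer boundaries. Under this assumption, every partially overlapping node has a child, so recursion is always possible.

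Finally, each $G_{tn}$ indexes exactly $C^{\mathsf{DCPE}}_{V_{tn}}$ by the sub-index construction, since the copy plus insertions yields precisely $V_{tn}$. Combined with the cover identity, this shows that every vector searched by the CS lies in $V_{[l_q,r_q]}$, so no range post-filtering is needed.
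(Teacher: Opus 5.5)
Your proposal is correct and follows essentially the same route as the paper. Selection implies containment; disjointness holds because no matched node is an ancestor of another and sibling subtrees have disjoint object sets; and coverage follows by tracing a qualifying object's unique root-to-leaf path until a contained node is selected. Your explicit assumption that a leaf's interval is the singleton $[a,a]$ (intervals tightened to the attribute values actually present) spells out a point the paper's proof uses only implicitly, when it asserts that reaching a single-value leaf forces containment in $[l_q,r_q]$.
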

\begin{proof}
A node is added to $I_q$ only when its interval is fully contained in $[l_q,r_q]$. Hence, for every $tn\in I_q$, $[l_{tn},r_{tn}]\subseteq[l_q,r_q]$, and every object
$o_i\in O_{tn}$ satisfies $a_i\in[l_q,r_q]$.

We next show that every object satisfying the range predicate is covered by a matched node. Consider any object $o_i$ such that $a_i\in[l_q,r_q]$. During the top-down traversal, any node containing $o_i$ cannot be pruned, because its interval contains $a_i$ and therefore overlaps with the query range. If such a node is fully
contained in the query range, it is added to $I_q$ and $o_i$ is covered. Otherwise, the traversal continues along the unique child whose object set contains $o_i$. Since the tree is recursively partitioned until each leaf corresponds to a single attribute value, the traversal eventually reaches a node whose interval is contained in $[l_q,r_q]$. Thus, every qualifying object belongs to the object set of some matched node.

Finally, once a node is added to $I_q$, its descendants are not visited. Therefore, no matched node is an ancestor of another matched node. Since sibling subtrees contain disjoint object sets, the object sets associated with different matched nodes are pairwise disjoint. It follows that $ \bigcup_{tn\in I_q} O_{tn} = \{o_i\in O\mid a_i\in[l_q,r_q]\}$. Therefore, all objects searched by the CS already satisfy the range
predicate, eliminating the need for range post-filtering.
\end{proof}




\noindent\underline{\textbf{CS-side Filtering.}}
Upon receiving the query request $(I_q,T_q,k)$, the CS performs coarse retrieval over the HNSW sub-indices associated with the matched nodes. For each $tn\in I_q$, the CS searches $G_{tn}$ using the DCPE trapdoor $T_q^{\mathsf{DCPE}}$ and retrieves a local candidate list $R_{tn}$ containing at most $k_0$ objects.
Since the local candidate lists are returned from different sub-indices, the CS further merges them into a unified candidate set. Specifically, it uses the distances computed over the DCPE ciphertexts to approximately compare and rank the candidates in
$\bigcup_{tn\in I_q}R_{tn}$. It then retains the top-$k'$ candidates, denoted by $R_q$, where $k'\geq k$ is a refinement parameter.
Since DCPE approximately preserves distance ordering, $R_q$ contains high-quality candidates without revealing the plaintext vectors or their plaintext distances. 



\noindent\underline{\textbf{CS-side Refinement.}} Given the coarse candidate set $R_q$, the CS performs exact secure ranking using the DCE trapdoor $T_q^{\mathsf{DCE}}$. Specifically, it maintains a max-heap $H$ of size at most $k$, which stores the current
top-$k$ results. Each candidate in $R_q$ is inserted into $H$ according to the exact distance ordering obtained through the DCE distance-comparison operation. Once $H$ contains $k$ candidates, a new candidate is retained only if it is closer to $v_q$ than the current farthest candidate in $H$.
After all candidates in $R_q$ have been examined, the objects retained in $H$ form the final top-$k$ result set. Since DCE reveals only the outcomes of distance comparisons, the CS can determine the exact ranking among the candidates without learning their plaintext vectors or exact distance values. Moreover, by Theorem~\ref{thm:exact-range-localization}, no additional range post-filtering is required. The CS finally returns the identifiers of the objects in $H$ to the QU.

\subsection{Cost Analysis}



In this section, we analyze the computation, communication, and storage costs of PP-RFANNS. We consider secure index construction, query processing at the QU and CS, communication between them, and overall space complexity. Let $|I_q|$ denote the number of matched nodes, $k_0$ the number of candidates retrieved from each matched sub-index, and $k’$ the number retained for refinement.



\noindent\underline{\textbf{Index Construction Cost.}} During secure index construction, the DO first encrypts all database vectors using both DCPE and DCE. Since encrypting one vector with DCPE and DCE requires $O(d)$ and $O(d^2)$ time, respectively, the total vector encryption cost is $O(nd^2)$. The DO then builds the N-ary tree $\mathcal{T}$ over $A$, which costs $O(n\log_b |A|)$.
Finally, the DO constructs an HNSW sub-index for each tree node in a bottom-up manner. For each node $tn\in\mathcal{T}$, let $n_{tn}$ be the number of vectors associated with $tn$. Constructing the corresponding HNSW sub-index costs $O(dn_{tn}\log n_{tn})$. Since every vector appears in $O(\log_b |A|)$ tree nodes, we have $\sum_{tn\in\mathcal{T}}n_{tn}=O(n\log_b |A|)$. Moreover, $\log n_{tn}\leq\log n$. Therefore, the total cost of constructing all HNSW sub-indices is $O(dn\log_b |A|\log n)$. Combining the above costs, the overall index construction complexity is $O(nd^2 + n\log_b |A| + dn\log_b |A| \log n)$.

\noindent\underline{\textbf{Query Processing Cost.}} Query processing consists of query preparation at the QU and filter-and-refine search at the CS. At the QU side, range localization over $\mathcal{T}$ costs $O(\log_b |A|)$. Generating the DCPE and DCE trapdoors costs $O(d)$ and $O(d^2)$, respectively. Therefore, the query preparation cost at the QU is $O(d^2+\log_b |A|)$.

At the CS side, the filtering phase searches $|I_q|$ matched HNSW sub-indices. Since searching one HNSW sub-index costs $O(d\log n)$ in the worst case, the total sub-index search cost is $O(|I_q| d \log n)$. Each sub-index returns at most $k_0$ candidates, and the returned list is already ordered. Therefore, the merge cost of those lists is $O(k_0|I_q| \log |I_q|)$, and the CS then retains the top-$k'$ candidates for refinement. During refinement, the CS computes
DCE-based secure distances for these $k'$ candidates and maintains a max-heap of size $k$ to obtain the final top-$k$ results. Each DCE comparison costs $O(d)$, and each heap update costs $O(\log k)$. Therefore, the refinement phase costs $ O(k'(d + \log k))$. Overall, the CS side costs $O(|I_q|d\log n + k_0|I_q| \log |I_q| + k'(d+\log k))$.

During query processing, the QU and the CS exchange two messages. First, the QU sends the DCPE and DCE trapdoors, the matched node identifiers, and the parameter $k$ to the CS. The two trapdoors occupy $O(d)$ space, while the matched node identifiers occupy $O(|I_q|)$ space. Second, the CS returns the identifiers of the final top-$k$ results, which requires $O(k)$ space. Hence, the total communication overhead is $O(d+|I_q|+k)$.

\noindent\underline{\textbf{Space Complexity.}}
Let $N_T$ denote the number of nodes in the N-ary tree $\mathcal{T}$. At the QU side, the N-ary tree requires $O(N_T)$ space. The QU also stores $SK_{\mathsf{DCPE}}$ of space $O(1)$ and $SK_{\mathsf{DCE}}$ of space $O(d^2)$. 
Thus, the QU-side storage is $O(N_T+d^2)$. At the CS side, the DCPE and DCE ciphertexts require $O(nd)$ space. Since each object appears in the HNSW sub-indices along a root-to-leaf path of length $O(\log_b |A|)$, the total number of indexed object appearances is $O(n\log_b |A|)$. Given the HNSW degree parameter $M$, all HNSW sub-indices require $O(Mn\log_b |A|)$ space. Thus, the CS-side storage is $O(nd+Mn\log_b |A|)$.

\eat{ 
\section{Security Analysis}
\label{sec:security}
In this section, we analyze the security of the proposed PP-RFANNS scheme under the real/ideal simulation paradigm. We first define the leakage function of our scheme, and then prove that any probabilistic polynomial-time adversary cannot distinguish the real execution from the ideal execution except with negligible probability, given only the information captured by the leakage function.

\subsection{Leakage Function}
Let $\mathcal{D}=\{(v_i,a_i)\}_{i=1}^n$ be the plaintext database, where each object consists of a vector $v_i$ and an attribute value $a_i$. Let $q$ be a query vector and $[l_q,r_q]$ be the query range. In our scheme, the information observed by the cloud server comes from three components: DCPE, DCE, and the N-ary tree based range localization.

First, DCE is used in the refine phase to perform exact distance comparison over encrypted vectors. Therefore, it reveals the exact relative distance order between the query vector and the candidates involved in the refinement phase. We denote this leakage by $\pi_{\mathcal{D},q}$.

Second, DCPE is used in the filter phase to support HNSW search and secure candidate merging. Since DCPE preserves approximate distance comparison, it reveals the approximate relative distance order between the query vector and database vectors, denoted by $\widetilde{\pi}_{\mathcal{D},q}$. In addition, the HNSW indices constructed over DCPE ciphertexts may reveal the approximate relative distance order among database vectors required by graph construction and traversal, denoted by $\widetilde{\pi}_{\mathcal{D}}$.

Third, the query user performs range localization locally over the N-ary tree and sends the identifiers of the matched nodes to the cloud server. Thus, the server learns the matched node set corresponding to the query range, denoted by $S_q^{\mathsf{node}}$. This leakage captures the node access pattern and the selected sub-indices, but not the plaintext query range itself unless such semantics are explicitly encoded in the node identifiers.

Therefore, the leakage function of the proposed scheme is defined as
\[
\mathcal{L}(\mathcal{D},q)
=
\left\{
\pi_{\mathcal{D},q},
\widetilde{\pi}_{\mathcal{D},q},
\widetilde{\pi}_{\mathcal{D}},
S_q^{\mathsf{node}}
\right\}.
\]

\subsection{Real and Ideal Models}
We define the real and ideal models of the proposed scheme as follows.

\noindent\textbf{Real Model.}
In the real model, there are two parties: a probabilistic polynomial-time adversary $\mathcal{A}$ and a challenger $\mathcal{C}$. Their interaction proceeds as follows.

\emph{Step 1 (System Setup).}
The challenger generates the secret parameters for DCPE and DCE, denoted by $(s,\beta)$ and $SK$, respectively. It also selects the branching factor $N$ of the N-ary tree, the HNSW parameters $(M, ef_c, ef_s)$, and the query parameters $(k_0,k,k')$. The public parameters are given to the adversary $\mathcal{A}$.

\emph{Step 2 (Index Construction).}
The adversary chooses a database $\mathcal{D}=\{(v_i,a_i)\}_{i=1}^n$ consisting of $n$ $d$-dimensional vectors and their associated attributes, and sends it to the challenger. The challenger follows the real protocol to encrypt the database using DCPE and DCE, producing $C_{\mathcal{D}}^{\mathsf{DCPE}}$ and $C_{\mathcal{D}}^{\mathsf{DCE}}$. It also constructs the N-ary tree over the attribute domain and the corresponding HNSW sub-indices over DCPE ciphertexts. The challenger then returns the server-side encrypted database and indices to $\mathcal{A}$.

\emph{Step 3 (Trapdoor Generation).}
The adversary chooses a query vector $q$ and a query range $[l_q,r_q]$, and sends them to the challenger. The challenger performs range localization over the N-ary tree and obtains the matched node set
$S_q^{\mathsf{node}}=\{node_1,\ldots,node_x\}$. It then generates the DCPE query trapdoor $T_q^{\mathsf{DCPE}}$ and the DCE query trapdoor $T_q^{\mathsf{DCE}}$, and returns
$\{T_q^{\mathsf{DCPE}},T_q^{\mathsf{DCE}},S_q^{\mathsf{node}}\}$ to $\mathcal{A}$.

\emph{Step 4 (Query Processing).}
The adversary executes the query processing procedure using
$C_{\mathcal{D}}^{\mathsf{DCPE}}$, $C_{\mathcal{D}}^{\mathsf{DCE}}$,
$\{T_q^{\mathsf{DCPE}},T_q^{\mathsf{DCE}}\}$, and
$S_q^{\mathsf{node}}$. Specifically, it first performs the filter phase over the matched HNSW sub-indices and then performs the refine phase using DCE-based exact distance comparison. Finally, it obtains the real query result $R^{\mathsf{real}}$.

\medskip
\noindent\textbf{Ideal Model.}
In the ideal model, there are also two parties: a probabilistic polynomial-time adversary $\mathcal{A}$ and a simulator $\mathcal{S}$ that has access only to the leakage function $\mathcal{L}(\mathcal{D},q)$.

\emph{Step 1 (System Setup).}
The simulator generates public parameters with the same format as those in the real model, including the branching factor $N$, the HNSW parameters $(M, ef_c, ef_s)$, and the query parameters $(k_0,k,k')$, and gives them to $\mathcal{A}$.

\emph{Step 2 (Simulated Index Construction).}
After receiving the database $\mathcal{D}$ from $\mathcal{A}$, the simulator constructs simulated ciphertexts and simulated server-side indices, denoted by
$\widetilde{C}_{\mathcal{D}}^{\mathsf{DCPE}}$ and
$\widetilde{C}_{\mathcal{D}}^{\mathsf{DCE}}$, using only the leakage information specified by $\mathcal{L}(\mathcal{D},q)$. The simulated ciphertexts and indices preserve the same leakage profile as the real ones, including the approximate distance-order information required by DCPE-based search.

\emph{Step 3 (Simulated Trapdoor Generation).}
After receiving the query vector $q$ and query range $[l_q,r_q]$, the simulator generates simulated query trapdoors
$\widetilde{T}_q^{\mathsf{DCPE}}$ and
$\widetilde{T}_q^{\mathsf{DCE}}$, as well as a simulated matched node set
$\widetilde{S}_q^{\mathsf{node}}$, according to the leakage function. These simulated objects are generated so that they reveal the same approximate distance order, exact comparison order, and node access pattern as the real execution.

\emph{Step 4 (Simulated Query Processing).}
The adversary executes the same query processing procedure using
$\widetilde{C}_{\mathcal{D}}^{\mathsf{DCPE}}$,
$\widetilde{C}_{\mathcal{D}}^{\mathsf{DCE}}$,
$\{\widetilde{T}_q^{\mathsf{DCPE}},\widetilde{T}_q^{\mathsf{DCE}}\}$, and
$\widetilde{S}_q^{\mathsf{node}}$, and obtains the simulated query result $R^{\mathsf{ideal}}$.

\subsection{Security Theorem}
\noindent\textbf{Theorem 1.}
Assuming that the underlying DCPE and DCE schemes are IND-KPA secure under their respective leakage profiles, the proposed PP-RFANNS scheme is IND-KPA secure under the leakage function
\[
\mathcal{L}(\mathcal{D},q)
=
\left\{
\pi_{\mathcal{D},q},
\widetilde{\pi}_{\mathcal{D},q},
\widetilde{\pi}_{\mathcal{D}},
S_q^{\mathsf{node}}
\right\}.
\]

\noindent\textbf{Proof.}
We prove that for any probabilistic polynomial-time adversary $\mathcal{A}$, there exists a probabilistic polynomial-time simulator $\mathcal{S}$ such that the view of $\mathcal{A}$ in the real execution is computationally indistinguishable from its view in the ideal execution, given only the leakage function $\mathcal{L}(\mathcal{D},q)$.

The adversary may know some plaintext vectors, attribute values, query vectors, query ranges, and their corresponding ciphertexts or trapdoors, as allowed in the known-plaintext setting. However, the adversary does not know the secret keys, random matrices, random permutations, or the fresh randomness used during encryption and trapdoor generation. Its goal is to infer information about other plaintext vectors, attribute values, query contents, or distance information beyond the leakage function.

We first consider the DCE component. In the proposed scheme, DCE is used only in the refine phase to compare encrypted candidates with respect to the query. By the security of DCE, the DCE ciphertexts and query trapdoors are computationally indistinguishable from simulated ciphertexts and trapdoors, except for the exact comparison outcomes explicitly revealed by DCE. Therefore, the DCE component leaks only the exact relative distance order $\pi_{\mathcal{D},q}$ among the refined candidates, and does not reveal plaintext vector contents, exact distance values, or secret key information.

We next consider the DCPE component. DCPE is used to support approximate distance comparison, HNSW-based search, and secure merging in the filter phase. The DCPE encryption process introduces randomization and transformation, so the ciphertexts do not directly reveal plaintext vector values. Even if the adversary knows some plaintext-ciphertext pairs, it cannot remove the fresh randomness used in other encryptions or recover the encryption keys with non-negligible probability. The information intentionally revealed by DCPE is limited to approximate distance-order information, including $\widetilde{\pi}_{\mathcal{D},q}$ and $\widetilde{\pi}_{\mathcal{D}}$. Given these order relations, the simulator can construct random ciphertexts and trapdoors that induce the same comparison outcomes as in the real execution. Thus, the adversary cannot distinguish the simulated DCPE view from the real one except with negligible probability.

We then consider the N-ary tree based range localization. The N-ary tree is used by the query user to map a query range to a set of matched tree nodes. The cloud server only receives the matched node identifiers and searches the corresponding encrypted HNSW sub-indices. Therefore, the leakage from the range filtering component is the node access pattern $S_q^{\mathsf{node}}$. The simulator can generate a simulated matched node set with the same size and access pattern according to the leakage function. Since the vectors stored in the corresponding sub-indices are encrypted, the node identifiers do not reveal vector contents or distance values beyond the explicitly allowed leakage.

Finally, the query result produced by the protocol depends only on the encrypted search procedure and the comparison outcomes captured by the leakage function. The real execution and the ideal execution expose the same approximate distance order, exact comparison order, and matched node access pattern. All remaining components, including ciphertext values, trapdoors, and encrypted index contents, can be simulated using randomness while preserving the same leakage. Hence, if an adversary could distinguish the real execution from the ideal execution with non-negligible advantage, it would imply that the adversary can extract information from DCPE ciphertexts, DCE ciphertexts, trapdoors, or node identifiers beyond the leakage function, contradicting the assumed security of the underlying primitives and the definition of $S_q^{\mathsf{node}}$.

Therefore, for any probabilistic polynomial-time adversary $\mathcal{A}$, we have
\[
\left|
\Pr[\mathsf{Real}_{\mathcal{A}}(\mathcal{D},q)=1]
-
\Pr[\mathsf{Ideal}_{\mathcal{A},\mathcal{S}}(\mathcal{L}(\mathcal{D},q))=1]
\right|
\leq \mathsf{negl}(\lambda),
\]
where $\lambda$ is the security parameter. Thus, the proposed PP-RFANNS scheme is IND-KPA secure under the leakage function $\mathcal{L}(\mathcal{D},q)$.
}

\section{Security Analysis}
\label{sec:security}

In this section, we analyze the security of the proposed PP-RFANNS scheme against the honest-but-curious CS. Our scheme employs two existing encryption primitives, namely DCPE and DCE, to protect vectors and support encrypted distance comparisons. Instead of claiming a new simulation-based security result for the complete composition of DCPE, DCE, and the HNSW sub-indices, we rely on the security guarantees established for the
underlying primitives and explicitly characterize the additional information revealed by the complete PP-RFANNS protocol.


\subsection{Security of the Underlying Primitives}
\label{sec:primitive-security}

\noindent\underline{\textbf{DCPE Security.}}
The proposed scheme adopts the Scale-and-Perturb construction of DCPE~\cite{fuchsbauer2022approximate}. Given a vector $v$, its DCPE ciphertext is generated as $C_v^{\mathsf{DCPE}}=s\cdot v+\lambda_v$, where $s$ is a secret
scaling factor and $\lambda_v$ is a random perturbation vector. DCPE is designed to preserve approximate distance-comparison relationships while preventing direct recovery of plaintext vector coordinates under its original security model. The CS receives only the DCPE ciphertexts and does not obtain the secret parameters $s$ and $\beta$. Therefore, the plaintext vectors cannot be directly recovered without violating the security guarantee of DCPE. Nevertheless, DCPE intentionally reveals approximate geometric relationships. In particular, the CS can compute distances between DCPE ciphertexts and use them to construct and traverse the HNSW sub-indices. Hence, DCPE does not hide ciphertext-space distances or the approximate distance orders induced by these distances.

\noindent\underline{\textbf{DCE Security.}}
DCE~\cite{liu2025ppanns} enables the CS to compare the exact distances between two encrypted vectors and an encrypted query without revealing the underlying vectors or their exact plaintext distance values. The DCE construction is IND-KPA secure under its
specified distance-comparison leakage. Specifically, the CS learns the outcomes of the distance comparisons invoked during query processing, but does not learn the plaintext vector coordinates, query vector coordinates, or exact distance values.

In our scheme, DCPE and DCE use independent secret keys and randomness. A vector or query vector is encrypted separately under the two primitives. Consequently, information obtained from one ciphertext representation does not provide the secret key or randomness used by the other encryption scheme.

\subsection{Database Privacy}
\label{sec:database-privacy}

The CS receives the DCPE and DCE ciphertexts, the HNSW sub-indices, and the associated object identifiers. The CS does not receive the plaintext vectors in $V$ or the secret keys of either encryption scheme. Recovering the coordinates of a database vector from its DCPE or DCE ciphertext would therefore require violating the security guarantee of the corresponding primitive. The use of two encrypted representations does not by itself reveal the plaintext vector, since the representations are generated with independent keys and randomness.

The plaintext numerical attributes in $A$ are not outsourced to the CS. They are used by the DO to construct the range-aware N-ary tree, which is subsequently distributed to authorized QUs rather than the CS. Therefore, the CS does not directly observe the numerical value $a_i$ associated with an object.

However, each HNSW sub-index corresponds to an N-ary tree node, and the CS observes the object identifiers associated with each sub-index. Consequently, the CS learns the object-to-sub-index membership relation and can infer set-containment relationships among sub-indices, as well as the repeated occurrence of an object in multiple sub-indices along the tree hierarchy. This information may reveal that the attributes of certain objects fall within the same hidden interval, although it does not directly disclose the interval boundaries or the exact attribute values.

\subsection{Query Privacy}
\label{sec:query-privacy}

For a query $q=(v_q,[l_q,r_q],k)$, the QU locally generates the DCPE and DCE trapdoors
$T_q=(T_q^{\mathsf{DCPE}},T_q^{\mathsf{DCE}})$. Under the security guarantees of DCPE and DCE, the CS cannot directly recover the plaintext $v_q$ from the trapdoors. The DCPE
trapdoor nevertheless reveals the ciphertext-space distances and approximate ordering information required for HNSW search, while the DCE trapdoor reveals the outcomes of the exact comparisons performed during refinement.
The query range $[l_q,r_q]$ is processed locally by the QU over the N-ary tree. The numerical endpoints $l_q$ and $r_q$ are never sent to the CS. Instead, the QU sends the matched node set $I_q$, which identifies the HNSW sub-indices to be searched. Therefore, the CS does not directly learn the numerical query bounds.

However, because the CS knows the object identifiers contained in each matched sub-index, it can derive the set of objects $O_q=\bigcup_{tn\in I_q}O_{tn}$ in the query range. Thus, the proposed scheme hides the numerical semantics of the query range, but does not hide the access pattern of the objects covered by the range.

For repeated queries, the CS may additionally observe whether two queries access the same sub-indices or overlapping object sets. Consequently, equality, containment, and overlap relationships between different range queries may be inferred from their access
patterns, even though their numerical endpoints remain hidden.

\subsection{Leakage Characterization}
\label{sec:leakage-characterization}

We now summarize the information revealed to the CS during secure index construction and query processing. 

\noindent\underline{\textbf{Construction Leakage.}} After secure index construction, the CS observes
$
\mathcal{L}_{\mathsf{construct}}(O)
=
(n,d,\mathsf{pp},
\mathsf{Topo}(\mathcal{G}),
\mathsf{Mem}(\mathcal{G}),
\widetilde{\Delta}_{V})
$. Here, 
\ding{202} $n$ and $d$ denote the number and dimensionality of the database vectors, respectively; 
\ding{203} $\mathsf{pp}$ contains the public index and query parameters, including the HNSW parameters, the ciphertext dimensions, and the parameters $k_0$ and $k'$; \ding{204} $\mathsf{Topo}(\mathcal{G})$ denotes the number, sizes, levels, vertices, and edges of the outsourced HNSW sub-indices;
\ding{205} $\mathsf{Mem}(\mathcal{G}) =\{(tn, i)\mid o_i\in O_{tn}\}$ denotes the membership relationship between object identifiers and HNSW sub-indices; 
\ding{206} $\widetilde{\Delta}_{V}$ denotes the pairwise DCPE ciphertext-space distances among all encrypted database vectors, which are directly computable by the CS from the outsourced DCPE ciphertext set.


The components of $\mathcal{L}_{\mathsf{construct}}$ are jointly determined by the real index construction procedure. In particular, the HNSW topology depends on the sub-index memberships, DCPE distance relationships, insertion order, index parameters, and the
randomness used during HNSW construction. They should therefore be understood as a mutually consistent server-side view rather than independently generated leakage values.

\noindent\underline{\textbf{Query Leakage.}}
For a query $q=(v_q,[l_q,r_q],k)$, the CS observes
$
\mathcal{L}_{\mathsf{query}}(O,q)
=
(I_q,O_q,
\mathsf{AP}^{\mathsf{HNSW}}_q,
\widetilde{\Delta}_{V,q},
\mathsf{Cand}_q,
R_q,
\pi^{\mathsf{DCE}}_{R_q,q},
\mathsf{Out}_q)
$. Here, 
\ding{202} $I_q$ is the set of matched tree-node identifiers sent by the QU;
\ding{203} $O_q=\bigcup_{tn\in I_q}O_{tn}$ is the set of object identifiers covered by the matched sub-indices;
\ding{204} $\mathsf{AP}^{\mathsf{HNSW}}_q$ denotes the HNSW access pattern, including the sub-indices, vertices, and edges examined during filtering;
\ding{205} $\widetilde{\Delta}_{V,q}$ denotes the DCPE ciphertext-space distances between the query trapdoor and all encrypted database vectors, which are directly computable by the CS after receiving the DCPE query trapdoor.
\ding{206} $\mathsf{Cand}_q=\{R_{tn}\mid tn\in I_q\}$ denotes the local candidate lists returned from the matched HNSW sub-indices; 
\ding{207} $R_q$ is the candidate set retained after merging the local candidate lists and is a derived component of the filtering leakage;
\ding{208} $\pi^{\mathsf{DCE}}_{R_q,q}$ denotes the exact relative distance order revealed by the DCE comparisons over $R_q$;
\ding{209} $\mathsf{Out}_q$ denotes the identifiers of the final top-$k$ results.



The HNSW access pattern is determined by the matched sub-indices, their graph topology, the DCPE comparison outcomes, and the search parameters. It is listed explicitly because it is directly observable by the CS during query execution.

For a sequence of queries, repeated matched nodes, candidate identifiers, and result identifiers reveal equality and overlap patterns across query executions. These cross-query relations are derivable from the per-query leakage described above.




\subsection{Leakage Comparisons}

We compare our scheme with the secure adaptations instantiated in Section~3.4. All three adaptations use the same iSHE-based range-checking protocol. We focus on the view of the primary CS and assume that the auxiliary server does not collude with it.
Under their respective cryptographic assumptions, all four schemes protect plaintext vectors, numerical attributes, query vectors, query bounds, and exact distance values, subject to their explicitly characterized leakage. Secure Pre-filtering reveals the access pattern of its encrypted range index, the records subjected to iSHE-based checking, the range-checking outcomes, the complete qualified-object set, and the DCE comparison outcomes. Secure Post-filtering reveals the access pattern, the DCE comparison outcomes, and the candidate set of global PP-ANNS, and the iSHE checking outcomes over the retrieved candidates. PP-iRangeGraph reveals the DCPE-guided graph traversal trace, including the inspected vertices and edges, the iSHE range-checking outcomes for inspected neighbors, the coarse candidate set retained after traversal, and the DCE comparison outcomes over the final refinement set.

Our scheme avoids online encrypted range checking, but reveals the matched node set and, through object-to-sub-index memberships, the complete qualified-object set. Secure Pre-filtering reveals the same qualified-object set together with its encrypted-index access pattern, whereas Secure Post-filtering and PP-iRangeGraph reveal range-checking outcomes only for the candidates examined. Secure Post-filtering, PP-iRangeGraph, and our scheme all use DCPE for coarse filtering and DCE for exact refinement; their principal difference is the stage at which the range predicate restricts encrypted vector search. Thus, our scheme should not be viewed as strictly more secure than the baselines. Rather, it trades matched-node and qualified-set leakage for lower online query overhead and avoids the additional non-collusion assumption required by our two-server baseline instantiations.

\subsection{Summary}
\label{sec:security-summary}

Based on the security guarantees of DCPE and DCE, our scheme protects the plaintext coordinates of the vectors in $V$, the plaintext numerical values in $A$, the plaintext query vector $v_q$, the query bounds $l_q$ and $r_q$, the exact plaintext distance values, and the secret keys against the honest-but-curious CS.
At the same time, our scheme explicitly allows the leakage of the encrypted HNSW topology, object-to-sub-index memberships, DCPE ciphertext-space distance relationships, matched-node access patterns, HNSW traversal traces, candidate identifiers, DCE comparison outcomes, and final result identifiers. In particular, the CS can determine the qualified encrypted objects, although it does not directly learn the query range or the corresponding plaintext attribute values.

Therefore, the security guarantee of our scheme should be understood as follows: under the established security properties of DCPE and DCE, the scheme prevents the CS from directly recovering the protected plaintext database and query contents, subject to the explicitly characterized structural, access-pattern, and distance-comparison leakage.
Our analysis assumes a static database and an honest-but-curious CS. Dynamic updates, malicious-server behavior, and result verifiability are outside the scope of this work.

\section{Experiments}
\label{sec:exp}

In this section, we conduct extensive experiments to demonstrate the superiority of our scheme over the baselines. We first present the experimental settings and then report our findings.

\begin{figure*}[t]
  \centering
  \includegraphics[width=0.6\textwidth]{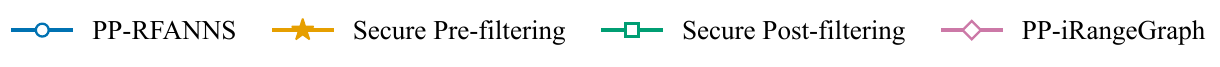}
  \vspace{0.3em}
  \setlength{\tabcolsep}{2pt}
  \renewcommand{\arraystretch}{0.9}
  \resizebox{\textwidth}{!}{%
  \begin{tabular}{@{}ccccc@{}}
    \includegraphics[width=0.16\textwidth]{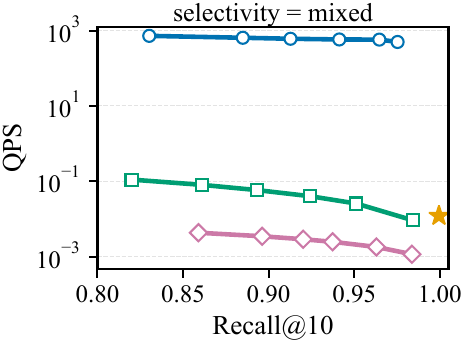} &
    \includegraphics[width=0.16\textwidth]{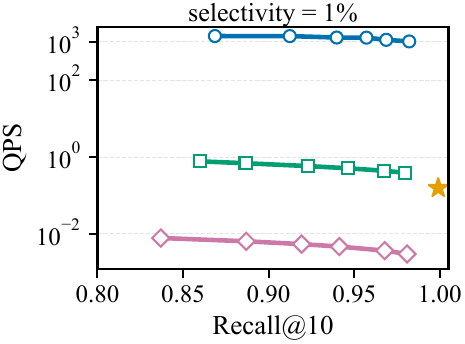} &
    \includegraphics[width=0.16\textwidth]{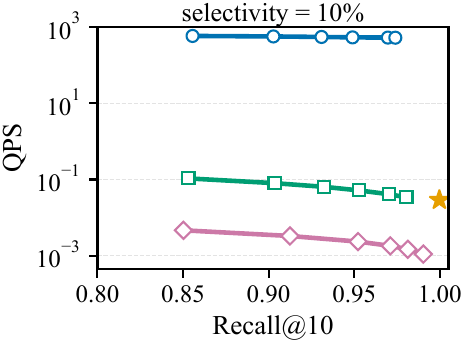} &
    \includegraphics[width=0.16\textwidth]{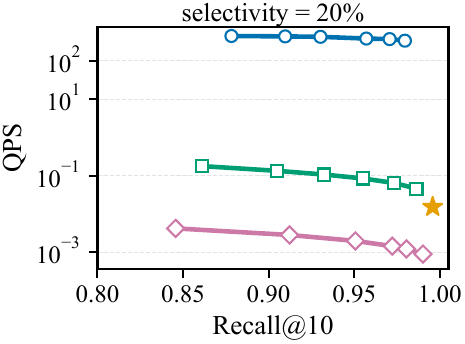} &
    \includegraphics[width=0.16\textwidth]{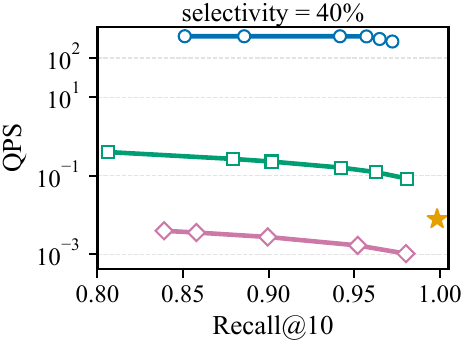} \\[-0.2em]
    \multicolumn{5}{c}{\footnotesize \textbf{(a) Sift1M}} \\[0.4em]

    \includegraphics[width=0.16\textwidth]{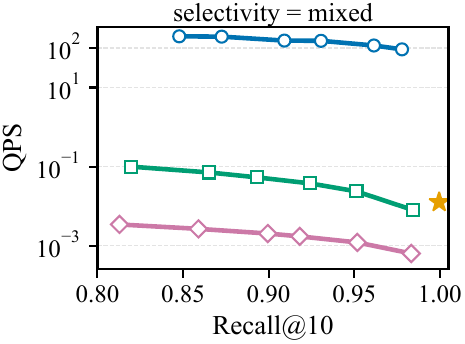} &
    \includegraphics[width=0.16\textwidth]{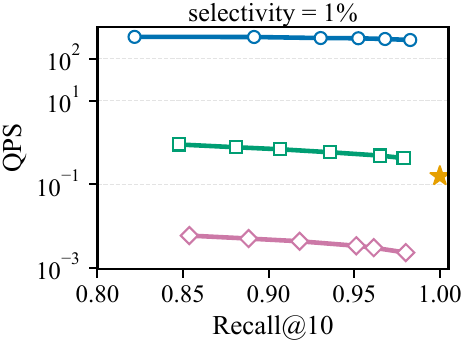} &
    \includegraphics[width=0.16\textwidth]{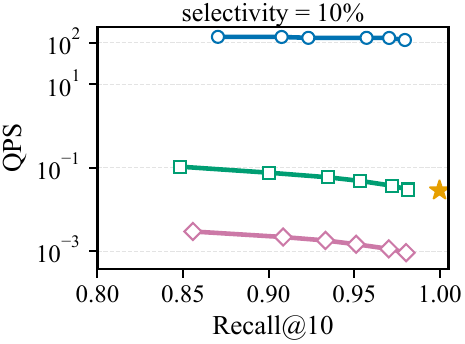} &
    \includegraphics[width=0.16\textwidth]{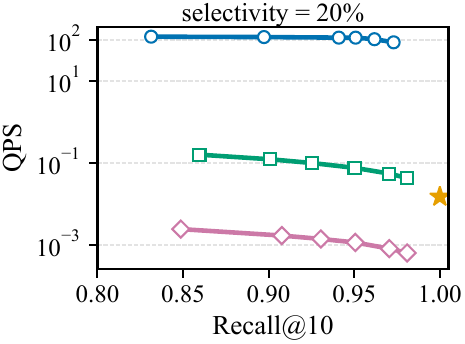} &
    \includegraphics[width=0.16\textwidth]{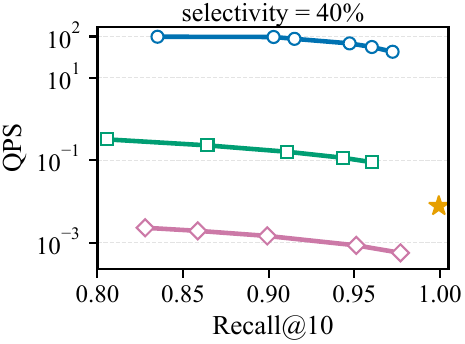} \\[-0.2em]
    \multicolumn{5}{c}{\footnotesize \textbf{(b) Gist}} \\[0.4em]

    \includegraphics[width=0.16\textwidth]{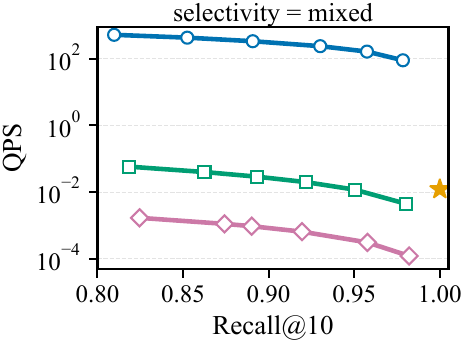} &
    \includegraphics[width=0.16\textwidth]{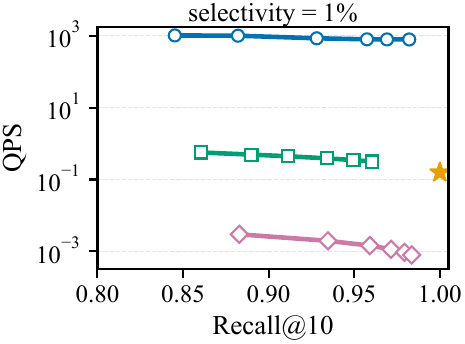} &
    \includegraphics[width=0.16\textwidth]{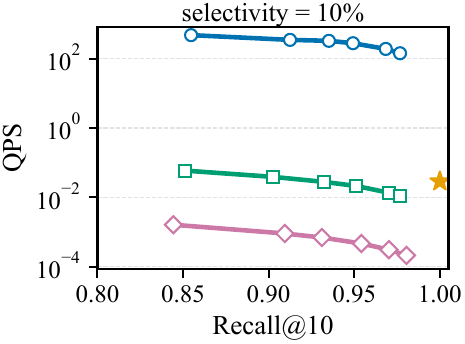} &
    \includegraphics[width=0.16\textwidth]{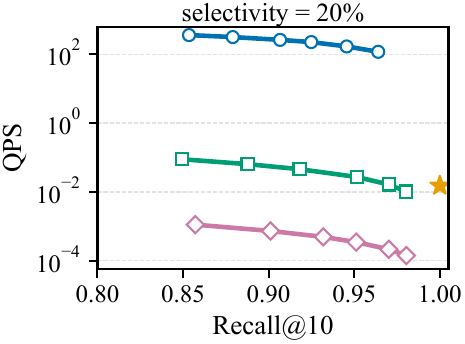} &
    \includegraphics[width=0.16\textwidth]{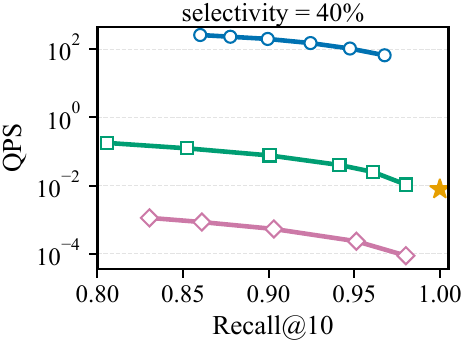} \\[-0.2em]
    \multicolumn{5}{c}{\footnotesize \textbf{(c) GloVe}} \\[0.4em]

    \includegraphics[width=0.16\textwidth]{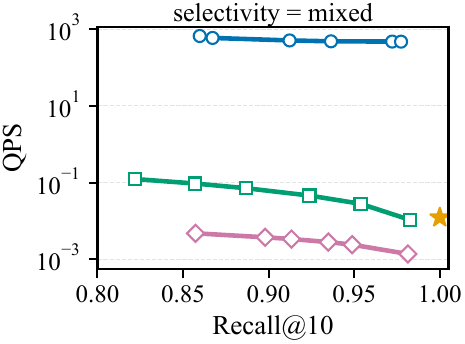} &
    \includegraphics[width=0.16\textwidth]{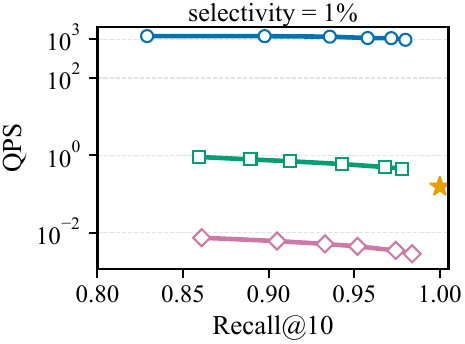} &
    \includegraphics[width=0.16\textwidth]{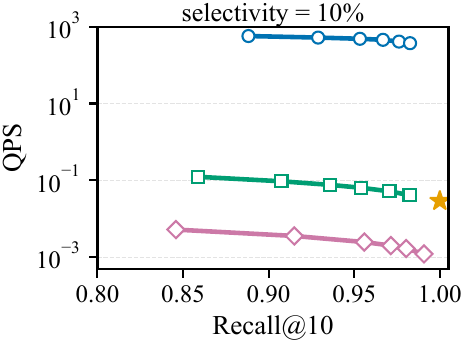} &
    \includegraphics[width=0.16\textwidth]{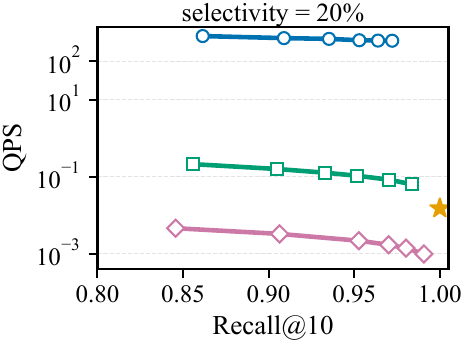} &
    \includegraphics[width=0.16\textwidth]{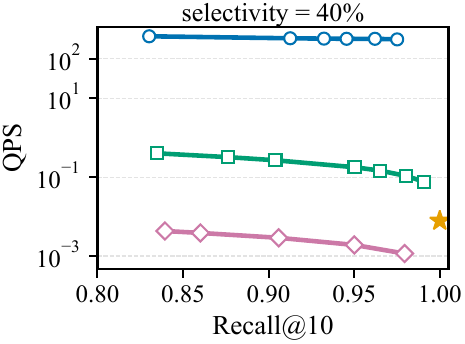} \\[-0.2em]
    \multicolumn{5}{c}{\footnotesize \textbf{(d) Deep1M}}
  \end{tabular}%
  }
  \vspace{-3ex}
  \caption{Comparisons of search performance of PP-RFANNS queries. (Exp. 1)}
  \label{fig:baseline_fraction}
\end{figure*}

\subsection{Experimental Settings}
\label{ssec:exp_set}


\noindent\underline{\textbf{Datasets.}}
We construct object sets by associating each vector with a numerical attribute. Specifically, we use four widely adopted $k$-ANNS benchmark datasets: Sift1M~\footnote{\label{fn:texmex}\url{http://corpus-texmex.irisa.fr/}}, Gist~\footref{fn:texmex}, GloVe~\footnote{\url{https://nlp.stanford.edu/projects/glove/}}, and Deep1M~\footnote{\url{http://sites.skoltech.ru/compvision/noimi/}}. Each vector is assigned an integer attribute independently and uniformly sampled from the domain $[1,10000]$. For each query vector, we randomly generate an interval within the same domain as the range predicate. Table~\ref{tab:datasets} summarizes the dataset statistics.

For the scalability evaluation, we further construct datasets containing different numbers of vectors randomly sampled from Sift1B~\footref{fn:texmex} and Deep1B~\footnote{\url{https://disk.yandex.ru/d/11eDCm7Dsn9GA/}}. Their numerical attributes and query ranges are generated following the same procedure. Unless otherwise specified, we use mixed-selectivity setting by default, where each query range is generated by first uniformly sampling its left endpoint from the attribute domain and then uniformly sampling its right endpoint from the remaining domain to the right.

\begin{table}[t]
  \caption{Dataset statistics.}
  \vspace{-2ex}
  \label{tab:datasets}
  \centering
  \begin{tabular}{|l|c|c|c|c|}
    \hline
    Dataset & \#dimensions & \#objects & \#queries \\
    \hline
    Sift1M & 128 & 1,000,000 & 10,000 \\
    \hline
    Gist & 960 & 1,000,000 & 1,000 \\
    \hline
    GloVe & 100 & 1,183,514 & 10,000 \\
    \hline
    Deep1M & 96 & 1,000,000 & 10,000 \\
    \hline
  \end{tabular}
\end{table}

\noindent\underline{\textbf{Performance Metrics.}}
We evaluate query efficiency using queries per second (QPS) and the query latency, and estimate the accuracy by $Recall@k$. QPS is calculated as the total number of queries divided by their total processing time. For a query $q$, let $N^*(q)$ denote the exact $k$-nearest neighbors among the vectors satisfying its range predicate, and let $N(q)$ denote the results returned by a PP-RFANNS method. The recall of $q$ is defined as $\mathrm{Recall}@k(q) = \frac{\left|N^*(q)\cap N(q)\right|}{k}$. We report $Recall@k$ averaged over all queries. Unless otherwise specified, we set $k=10$.


\noindent\underline{\textbf{Baselines.}}
We compare our scheme with the secure adaptations introduced in Section~\ref{sec:limit}: Secure Pre-filtering, Secure Post-filtering, and PP-iRangeGraph. All methods operate in an outsourced encrypted-search setting. 
For a fair comparison, all methods use the same datasets, queries, and settings for the cryptographic primitives shared among them. Following prior work~\cite{liu2025ppanns}, we set $M=40$ and $\mathit{efConstruction}=600$ for all HNSW-based indices. We vary the principal search parameters of each approximate method to obtain its QPS–$Recall@k$ trade-offs. Specifically, PP-RFANNS varies $k_0$, $k’$, and the HNSW search breadth; Secure Post-filtering varies the search breadth of its global encrypted HNSW index and the number $k’$ of retrieved candidates; and PP-iRangeGraph varies its sub-index search breadth. Secure Pre-filtering exhaustively evaluates all qualified vectors using DCE and therefore returns the exact top-$k$ results.

\noindent\underline{\textbf{Computing Environments.}}
All experiments are conducted on a server equipped with two 10-core Intel(R) Xeon(R) Silver 4210 CPUs and 256~GiB of RAM, running CentOS 7.9.2009. Three baselines additionally use a non-colluding auxiliary server with the same hardware configuration to support iSHE-based secure range checking, whereas PP-RFANNS operates with a single cloud server. All methods are implemented in \texttt{C++}, and the query-processing operations on each server are executed using a single thread unless otherwise specified. Under this setting, one end-to-end iSHE-based range check takes $0.0312$ seconds on average, including all cryptographic operations and inter-server communication.

\subsection{Experimental Results}
\label{ssec:exp_res}

\noindent\underline{\textbf{Exp. 1: Search Performance of PP-RFANNS Queries.}}
We compare PP-RFANNS with the three baselines under mixed and fixed query selectivities from 1\% to 40\%. As shown in Figure~\ref{fig:baseline_fraction}, PP-RFANNS consistently achieves the best QPS–Recall trade-off. Even at 40\% selectivity, where Secure Post-filtering is relatively favorable, PP-RFANNS still achieves 595x to 2600x speedups at $Recall@10 = 0.95$. 
Secure Pre-filtering achieves $Recall@10 = 1$, but its QPS is always less than 0.2. Although its secure range index reduces the number of iSHE checks, the remaining checks still incur substantial computation and interaction overhead. It also performs exhaustive DCE-based search over all qualified vectors, making it increasingly expensive at high selectivity. PP-iRangeGraph repeatedly invokes iSHE during graph traversal, so its QPS is generally below 0.01. In contrast, PP-RFANNS reaches 68--1,288 QPS at $Recall@10=0.95$.


All three baselines incur online iSHE range-checking and inter-server interaction. PP-RFANNS instead localizes the range at the QU and searches only the matched encrypted sub-indices on a single CS, achieving better QPS--Recall trade-offs across different selectivities.



\begin{figure}[!t]
  \centering
  \includegraphics[width=0.36\textwidth]{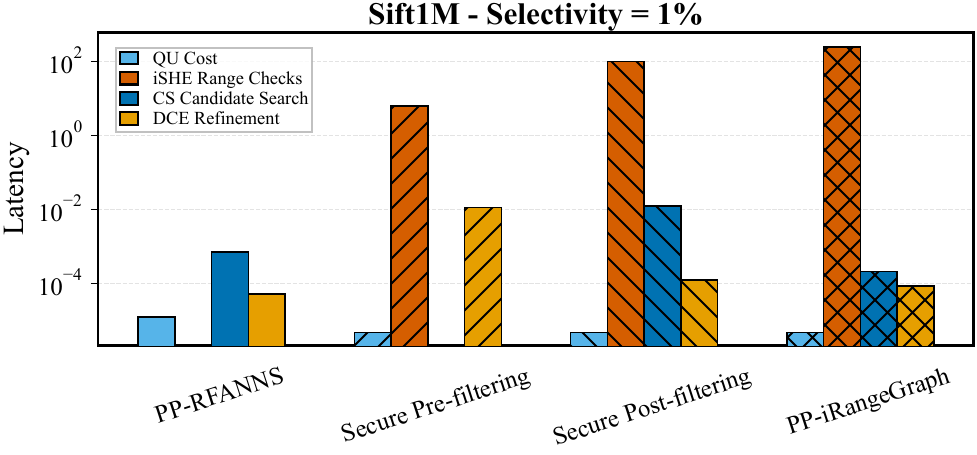}
  \hfill
  \includegraphics[width=0.36\textwidth]{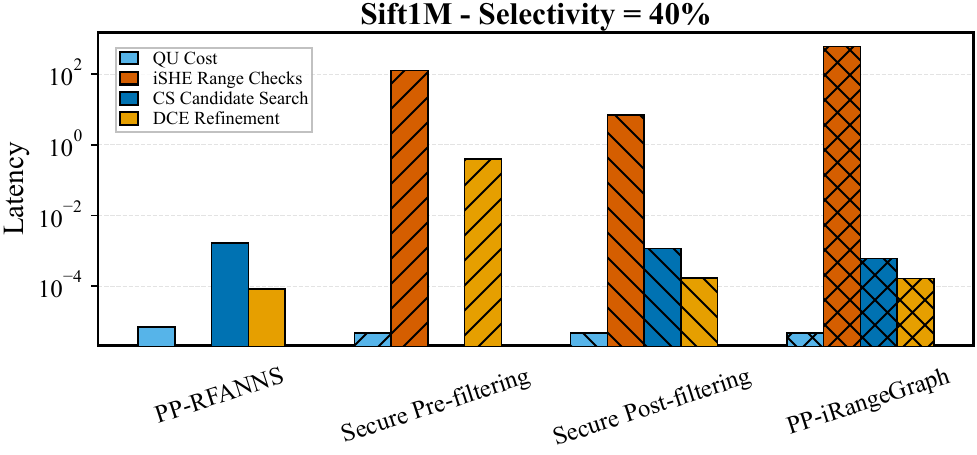}
  \vspace{-2ex}
  \caption{Cost decomposition on Sift1M at $Recall@10=0.95$ (Exp. 2)}
  \label{fig:time_breakdown}
\end{figure}




\noindent\textbf{Exp. 2: Cost Decomposition of PP-RFANNS Queries.} We decompose the per-query latency of each method to identify its dominant cost components and explain the observed performance differences.
Figure~\ref{fig:time_breakdown} reports the latency breakdown on Sift1M at $Recall@10=0.95$. The decomposition shows that the query latency of all three baselines is
dominated by iSHE-based secure range checking. Secure Pre-filtering uses iSHE to identify the qualified objects before vector search, Secure Post-filtering applies iSHE to the candidates retrieved from the global encrypted HNSW index, and PP-iRangeGraph repeatedly invokes iSHE for newly inspected neighbors during graph traversal. These online checks
incur expensive homomorphic computation, decryption, and inter-server interaction. In contrast, PP-RFANNS performs range localization locally at the QU before server-side vector search and therefore eliminates this dominant online iSHE cost. This result directly demonstrates the advantage of decoupling range localization from encrypted vector search.

Among the remaining CS-side components, the vector search costs differ across the baselines. Secure Pre-filtering directly performs exhaustive DCE-based distance comparisons over all qualified vectors. Its DCE cost therefore grows with the number of objects satisfying the range predicate and becomes substantial at high selectivity. In contrast, others use DCPE for coarse candidate search and DCE only for final refinement. Their DCPE-based search repeatedly evaluates ciphertext-space distances during HNSW or graph traversal and may inspect many vertices before obtaining a sufficiently accurate candidate set. Although an individual DCE comparison is more expensive, DCE is applied only to the much
smaller refinement set retained after coarse retrieval, so its total cost remains relatively low.

\begin{table}[t]
\centering
\caption{Index cost: each entry reports index building time in seconds / index size in GiB (Exp. 3)}
\label{tb:index_cost}
\small
\setlength{\tabcolsep}{3.5pt}
\renewcommand{\arraystretch}{1.05}
\begin{tabular}{lcccc}
\toprule
\textbf{Method}
& \textbf{Sift1M}
& \textbf{Gist}
& \textbf{GloVe}
& \textbf{Deep1M} \\
\midrule
PP-RFANNS
& 418 / 2.51
& 2531 / 2.51
& 768 / 5.39
& 411 / 2.51 \\

Secure Pre
& 0.03 / 0.005
& 0.03 / 0.005
& 0.03 / 0.006
& 0.03 / 0.005 \\

Secure Post
& 150 / 0.79
& 1139 / 3.89
& 204 / 0.82
& 138 / 0.68 \\

PP-iRangeGraph
& 3071 / 1.49
& 10373 / 1.40
& 4780 / 1.47
& 2865 / 1.62 \\

DCPE Encryption
& 6 / 0.47
& 31 / 3.57
& 6 / 0.44
& 5 / 0.35 \\

DCE Encryption
& 29 / 8.10
& 287 / 57.69
& 26 / 7.61
& 18 / 6.19 \\

\bottomrule
\end{tabular}
\end{table}

\noindent\underline{\textbf{Exp. 3: Index Cost.}} We evaluate the index cost of each method in terms of \emph{building time} and \emph{index size}. Table~\ref{tb:index_cost} reports both metrics on the four datasets. Since DCPE and DCE encryption are independent of index construction, their encryption time and ciphertext size are reported separately in the last two rows.

Secure Pre-filtering and Secure Post-filtering incur relatively low index costs because the former only builds a lightweight B-tree, whereas the latter maintains a single global HNSW index. In contrast, PP-RFANNS and PP-iRangeGraph construct hierarchical range-aware graph indices and therefore require substantially more construction time and storage. 
Between the two hierarchical methods, PP-RFANNS is faster to build than PP-iRangeGraph. This improvement mainly results from its shallower N-ary-tree hierarchy. The depths of PP-RFANNS are 7, 7, 14, and 7 on Sift1M, Gist, GloVe, and Deep1M, respectively, compared with 20, 20, 21, and 20 for PP-iRangeGraph. Because graph indices are constructed repeatedly along root-to-leaf paths, the reduced hierarchy depth considerably lowers the overall construction workload.  PP-RFANNS, however, requires more index space than PP-iRangeGraph, due to their underlying graph-index implementations. Overall, PP-RFANNS trades additional
storage for substantially faster index construction.

\begin{figure*}[t]
    \centering
    \vspace{-1em}
    \makebox[\textwidth][c]{%
        \includegraphics[width=0.22\textwidth]{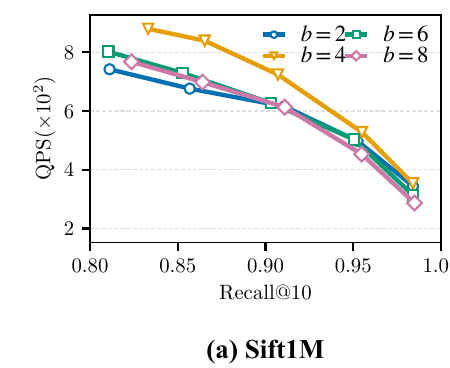}\hspace{0.002\textwidth}%
        \includegraphics[width=0.22\textwidth]{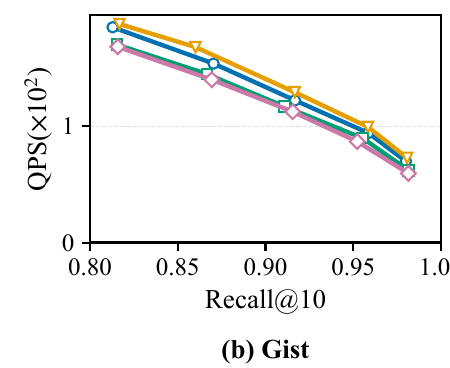}\hspace{0.002\textwidth}%
        \includegraphics[width=0.22\textwidth]{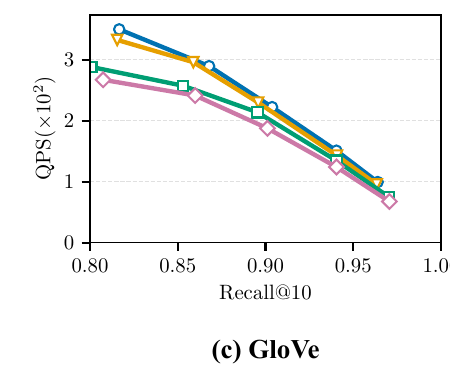}\hspace{0.002\textwidth}%
        \includegraphics[width=0.22\textwidth]{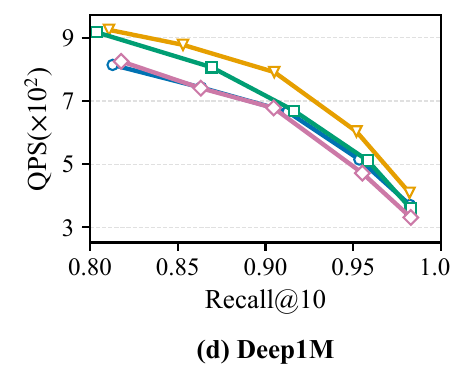}%
    }
    \vspace{-5ex}
    \caption{Effects of the branching factor $b$ on query performance (Exp. 4)}
    \label{fig:effect_N}
\end{figure*}


\noindent\textbf{Exp. 4: Effect of the Branching Factor $b$.}
The branching factor $b$ controls the structure of the N-ary tree and the decomposition of a query range. Increasing $b$ reduces the tree height, but may produce more matched nodes and consequently require the CS to search more HNSW sub-indices. Figure~\ref{fig:effect_N} evaluates the effect of $b$ on the QPS--Recall trade-off. On Sift1M, Gist, and Deep1M, query performance first improves and then degrades as $b$ increases, with $b=4$ generally achieving the best trade-off. On GloVe, performance decreases with increasing $b$, indicating that the additional sub-index searches outweigh the benefit of a shallower tree. Based on these results, we set $b$ to $4$, $4$, $2$, and $4$ for Sift1M, Gist, GloVe, and Deep1M, respectively, in Exp. 1.

\begin{figure*}[t]
    \centering
    \vspace{-1em}
    \makebox[\textwidth][c]{%
        \includegraphics[width=0.22\textwidth]{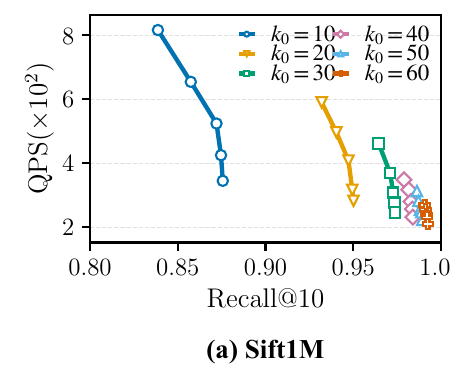}\hspace{0.002\textwidth}%
        \includegraphics[width=0.22\textwidth]{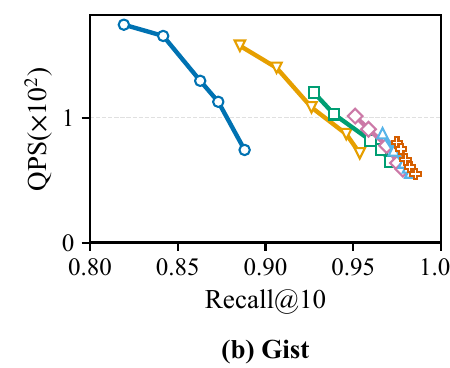}\hspace{0.002\textwidth}%
        \includegraphics[width=0.22\textwidth]{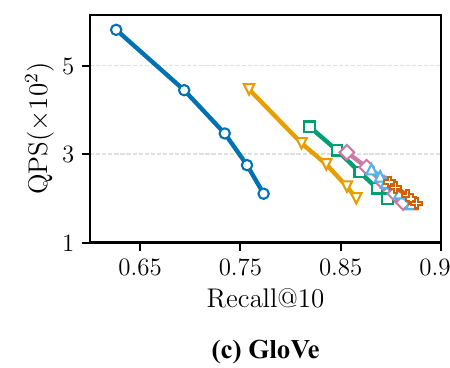}\hspace{0.002\textwidth}%
        \includegraphics[width=0.22\textwidth]{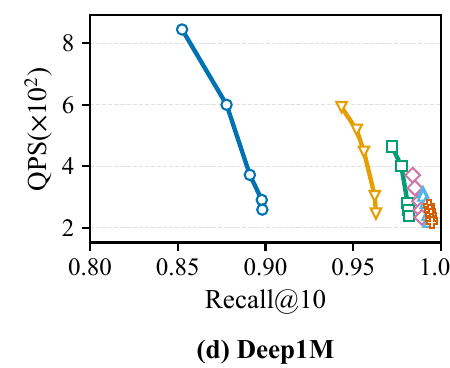}%
    }
    \vspace{-5ex}
    \caption{Effects of the candidate number $k_0$ on query performance (Exp. 5)}
    \label{fig:effect_k0}
\end{figure*}


\noindent\underline{\textbf{Exp. 5: Effect of the Search Parameter $k_0$.}}
The parameter $k_0$ controls the number of candidates retrieved from each matched HNSW sub-index during the filtering phase. As shown in Figure~\ref{fig:effect_k0}, increasing $k_0$ generally improves $Recall@10$ but reduces QPS. A larger $k_0$ expands the candidate pool and increases the likelihood of retaining the true nearest neighbors, at the cost of additional sub-index search, candidate merging, and refinement overhead. These results demonstrate the expected QPS--Recall trade-off controlled by $k_0$.

\begin{figure*}[t]
    \centering
    \vspace{-1em}
    \makebox[\textwidth][c]{%
        \includegraphics[width=0.22\textwidth]{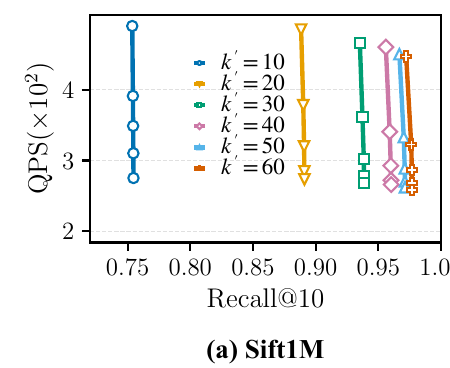}\hspace{0.002\textwidth}%
        \includegraphics[width=0.22\textwidth]{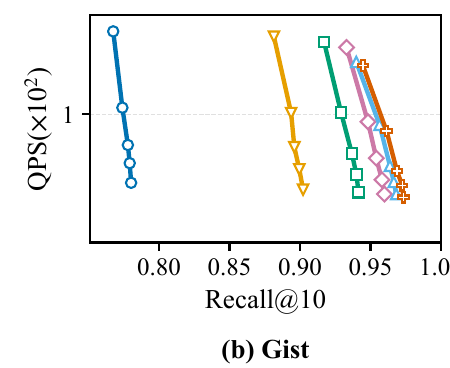}\hspace{0.002\textwidth}%
        \includegraphics[width=0.22\textwidth]{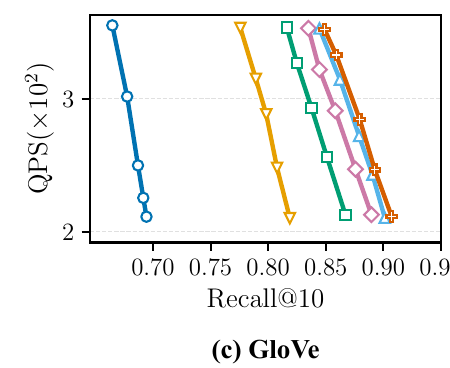}\hspace{0.002\textwidth}%
        \includegraphics[width=0.22\textwidth]{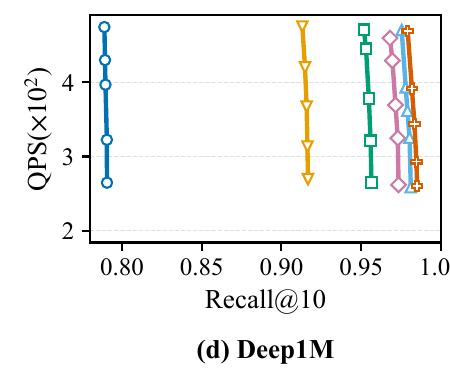}%
    }
    \vspace{-5ex}
    \caption{Effects of the parameter $k'$ on query performance (Exp. 6)}
    \label{fig:effect_k_prime}
\end{figure*}


\noindent\underline{\textbf{Exp. 6: Effect of the Search Parameter $k^\prime$.}}
The parameter $k^\prime$ controls the number of coarse candidates retained after merging the local candidate lists for DCE-based refinement. A larger $k^\prime$ increases the likelihood that the true nearest neighbors are included in the refinement set, but also enlarges the number of expensive DCE comparisons. To evaluate this trade-off, we fix $k_0=40$ and vary $k^\prime$. As shown in Figure~\ref{fig:effect_k_prime}, $Recall@10$ generally improves as $k^\prime$ increases and gradually converges, while the additional refinement cost reduces query efficiency. This result indicates that most true nearest neighbors appear near the top of the merged candidate list. GloVe exhibits a larger recall variation than the other datasets, suggesting that its candidate quality is more sensitive to $k^\prime$.

\begin{figure}[t]
  \centering
  \includegraphics[width=0.46\linewidth]{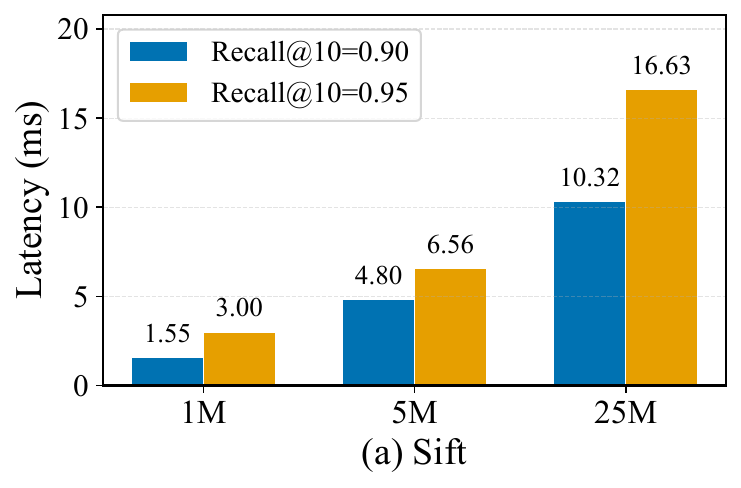}
  \includegraphics[width=0.46\linewidth]{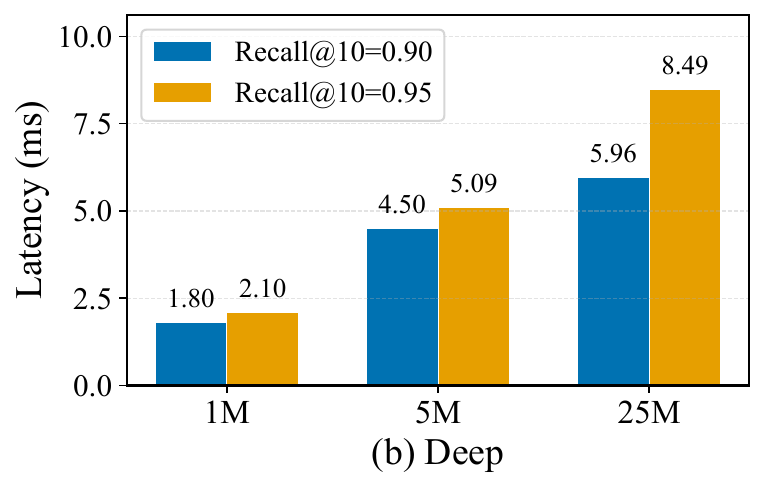}

  \vspace{-0.2em}

  \includegraphics[width=0.46\linewidth]{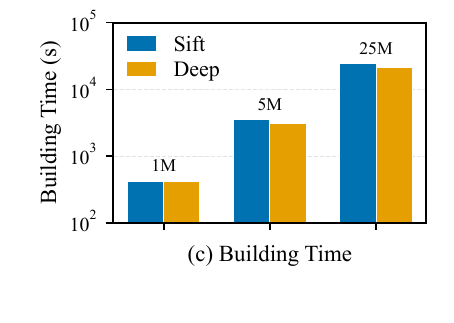}
  \includegraphics[width=0.46\linewidth]{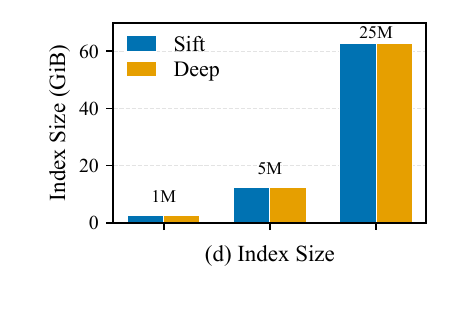}

  \vspace{-2ex}
  \caption{The scalability of our scheme (Exp. 7).}
  \label{fig:scalability}
  \vspace{-2ex}
\end{figure}

\noindent\underline{\textbf{Exp. 7: Scalability Test.}}
We evaluate the scalability of PP-RFANNS on random samples of Sift1B and Deep1B containing 1M, 5M, and 25M vectors. The attribute values and range-filtered queries are generated following the procedure described in the experimental setup. Figure~\ref{fig:scalability} reports both the online search performance and the offline index cost. Figures~\ref{fig:scalability}(a) and~\ref{fig:scalability}(b) show the query latency at Recall@$10$ values of 0.90 and 0.95. On both datasets, the latency increases gradually with the database size and remains below 17 ms even at 25M vectors. Figures~\ref{fig:scalability}(c) and~\ref{fig:scalability}(d) report the index construction time and index size, respectively. The index size increases from about 2.5 GiB at 1M vectors to about 62.6 GiB at 25M vectors. This growth is expected because the N-ary tree has $h=O(\log_b|A|)$ levels, and each object is indexed by the HNSW sub-indices along its root-to-leaf path, resulting in $O(n\log_b|A|)$ total indexed object occurrences. The construction time also grows faster than linearly because each object must be inserted into HNSW sub-indices at multiple tree levels. Nevertheless, index construction is performed only once offline. Overall, PP-RFANNS maintains low query latency as the database grows, while exhibiting the expected hierarchical-index construction and storage costs.

\vspace{-2ex}
\section{Conclusion}
In this paper, we study privacy-preserving range-filtered approximate nearest neighbor search over outsourced encrypted data. We proposed a PP-RFANNS scheme that combines an N-ary tree--HNSW hybrid index with DCPE-based filtering and DCE-based refinement. The scheme performs range localization at the query user and restricts encrypted vector search to the matched HNSW sub-indices, thereby avoiding online encrypted range checking.
We analyze its cost and security under the honest-but-curious model. Experiments show that PP-RFANNS achieves better QPS--Recall trade-offs than secure pre-filtering, secure post-filtering, and PP-iRangeGraph across different query selectivities, and scales effectively to larger datasets.


\bibliographystyle{ACM-Reference-Format}
\bibliography{main}

\end{document}